\documentclass[11pt]{article}

\usepackage[english]{babel}

\usepackage[letterpaper,top=2cm,bottom=2cm,left=3cm,right=3cm,marginparwidth=1.75cm]{geometry}

\usepackage{amsmath}
\usepackage{amssymb}
\usepackage{latexsym}
\usepackage{mathtools}
\usepackage{graphicx}
\usepackage{graphicx,epstopdf}
\usepackage{graphicx,rotating,booktabs}
\usepackage{natbib}
\usepackage{multirow}
\usepackage{subfigure}
\usepackage{xcolor}
\usepackage[colorlinks=true, allcolors=blue]{hyperref}
\usepackage{tikz} 
\usetikzlibrary{positioning,fit,calc}
\usetikzlibrary{shapes,arrows}
\tikzset{block/.style={draw, thick, text width=2cm ,minimum height=1.3cm, align=center}, line/.style={-latex}}  
\usepackage{authblk} 
\usepackage{float}
\usepackage{gensymb}
\usepackage{algorithm,algcompatible}
\usepackage{longtable}

\usepackage{amsthm}
\newtheorem{theorem}{Theorem}
\newtheorem{corollary}{Corollary}[theorem]

\newtheorem{assumption}{Assumption} 

\newcommand{\bc}{\begin{center}}
\newcommand{\ec}{\end{center}}
\newcommand{\bi}{\begin{itemize}}
\newcommand{\ei}{\end{itemize}}
\newcommand{\be}{\begin{enumerate}}
\newcommand{\ee}{\end{enumerate}}
\newcommand{\bd}{\begin{description}}
\newcommand{\ed}{\end{description}}
\newcommand{\bs}{\begin{single}}
\newcommand{\es}{\end{single}}

\algnewcommand\INPUT{\item[\textbf{Input:}]}%
\algnewcommand\OUTPUT{\item[\textbf{Output:}]}%

\title{Generalized Bayesian Inference using the Bayesian Bootstrap for Survival Models} 
\author[1]{K Shuvo Bakar \thanks{\href{mailto:shuvo.bakar@sydney.edu.au}{shuvo.bakar@sydney.edu.au}}}
\author[1,2]{Armando Teixeira-Pinto}
\affil[1]{Biostatistics, Sydney School of Public Health, The University of Sydney, Camperdown, NSW-2050, Australia}
\affil[2]{Centre for Kidney Research, Kids Research Institute, The Children’s Hospital at Westmead, NSW-2145, Australia}

\date{}

\begin{document}

\maketitle

\begin{abstract}

Survival inference often requires uncertainty quantification under censoring and limited sample sizes, while prior information may be available but difficult to incorporate without specifying a full likelihood. Likelihood-based Bayesian survival methods provide prior-informed inference but may be sensitive to distributional assumptions and computationally demanding for complex survival models. The Bayesian bootstrap provides a likelihood-free, nonparametric approach to uncertainty quantification, but by itself does not provide a general mechanism for incorporating priors on model parameters. We propose a general Bayesian method for survival models that combines the generalized Bayesian (Gibbs) updating with Bayesian bootstrap. The Bayesian bootstrap generates a distribution over a target survival estimator through Dirichlet weights, while generalized Bayesian updating incorporates parameter-specific prior information through a loss function. The resulting posterior provides a flexible alternative to likelihood-based Bayesian inference and can be applied to a broad class of survival estimators. We develop the method for the Cox proportional hazards model and obtain posterior inference for regression coefficients and hazard ratios. Simulation studies demonstrate uncertainty quantification and prior-data learning across varying sample sizes. An application to right-censored survival data illustrates its practical utility. The methodology is implemented in the open-source \texttt{R} package \texttt{BayesBoots}.
\end{abstract}

\noindent {\bf Keywords:}  Bayesian bootstrap; Generalized Bayesian inference; Gibbs posterior; Survival analysis; Cox proportional hazards model.

\section{Introduction}\label{sec:intro}

Survival analysis is a fundamental statistical method for studying time-to-event outcomes in medicine, epidemiology, reliability engineering, and public health~\citep{andersen2012statistical}. Often survival data includes censoring, where event times are only partially observed, making uncertainty quantification and robust inference difficult, particularly in small or heterogeneous datasets~\citep{klein2003survival}. For a right-censored survival data, we define $\mathcal{D} = \{(T_i, \delta_i, X_i)\}_{i=1}^n$, where $T_i$ denotes the observed time, $\delta_i \in \{0,1\}$ indicates whether the event of interest is observed, and $X_i \in \mathbb{R}^p$ are covariates.
Classical survival methods such as the Cox proportional hazards model remain the dominant tools because of their interpretability, robustness, and easy modeling assumptions. These approaches are primarily frequentist and rely largely on asymptotic uncertainty estimates~\citep{fleming2013counting}. In many modern applications, especially biomedical studies, there is increasing need for methods that can provide richer uncertainty characterization while incorporating prior knowledge~\citep{spiegelhalter2004bayesian}.

\subsection{Problem setup}

Bayesian survival analysis addresses some of these limitations of uncertainty characterization by introducing prior distributions~\citep{bartovs2022informed}. Existing Bayesian approaches typically rely on fully specified probabilistic likelihood models, including parametric survival distributions, Bayesian Cox models, and Bayesian nonparametric constructions such as Dirichlet process mixtures~\citep{ibrahim2001bayesian}. Although flexible, Bayesian survival models often require strong distributional assumptions, can be computationally demanding due to posterior Markov chain Monte Carlo (MCMC) sampling, and may be sensitive to model misspecification, particularly in the presence of complex hazard functions, heterogeneity, or substantial censoring~\citep{hanson2006modeling,ibrahim2001bayesian}. 
To explain more, for example, one important consideration in Bayesian Cox model is the treatment of the baseline hazard function. In the classical Cox, the baseline hazard is left unspecified, allowing estimation of model parameters through the partial likelihood without requiring a parametric assumption about the underlying hazard function~\citep{cox1972regression}. Whereas, Bayesian implementations require the baseline hazard to be modeled explicitly in order to construct the full likelihood and posterior distribution. Some common approaches include piecewise constant hazard models, where the baseline hazard is approximated using flexible spline-based approaches that model the baseline hazard using smooth functions~\citep{murray2015flexible,ibrahim2001bayesian}, and stochastic process priors such as Weibull or Gamma~\citep{lee2016hierarchical} that provide flexible Bayesian representations of the cumulative baseline hazard~\citep{hjort1990nonparametric}. However, introducing a prior structure for the baseline hazard creates additional modelling choices, and posterior inference may be sensitive to the selected baseline hazard specification.
Another challenge relates to handling tied event times. In conventional Cox regression, tied survival times are accommodated through well-established approximations to the partial likelihood, including the Breslow and Efron methods, which are computationally efficient and widely implemented~\citep{breslow1974covariance,efron1977efficiency}. In Bayesian Cox models, however, the full likelihood formulation often requires explicit treatment of the event process, and tied observations may introduce additional complexity~\citep{ibrahim2001bayesian,gamerman1991dynamic}. 
Scalability in the computation is also a factor to consider. Although Bayesian Cox models provide a flexible framework for uncertainty quantification and incorporation of prior knowledge, computational demands can become substantial for very large survival datasets. This challenge is primarily related to MCMC sampling method, which sometimes need substantial computational time and careful convergence assessment. For large-scale applications, alternative computational strategies, including variational inference and other approximate Bayesian methods, have been explored to improve scalability while maintaining Bayesian inference properties~\citep{blei2017variational,ranganath2014black}. While these approaches improve computational efficiency, they may introduce additional assumptions and require validation against more computationally intensive methods.

\subsection{Theoretical contributions and underlying concept}

These limitations motivate an alternative approach to get inference that is based on the empirical stability of model estimators. 
Our paper contributes to the literature on the foundational statistical properties of Bayesian inference using the Bayesian bootstrap, a concept developed in the early 1980s by~\citet{rubin1981bayesian} following the bootstrap method proposed by~\citet{efron1979bootstrap}.
We give theorems to establish the relationship between Bayesian inferences using conventional prior-likelihood and using Bayesian bootstrap-based generalized inference through Bernstein–von Mises arguments~\citep{van1998asymptotic}. We propose that under non-negligible prior settings the Bayesian bootstrap-based generalized inference is equivalent to the Bayesian inference. 


The Bayesian bootstrap offers a natural mechanism~\citep{lo1993bayesian}, where by assigning random Dirichlet weights to the observed data, it creates a distribution over quantities such as regression coefficients without imposing parametric assumptions on the underlying model. The resulting uncertainty is entirely data-driven. However, because the Dirichlet weighting scheme is centered on the empirical distribution, it does not readily allow for the incorporation of external or subject-matter prior information beyond what is implicitly encoded in the data.
Building on this perspective, we propose a Bayesian bootstrap framework for regression models that explicitly enables the incorporation of prior information. Compared to standard Bayesian bootstrap constructions, where uncertainty is driven solely by Dirichlet perturbations of the empirical measure from Monte Carlo (MC) samples, our approach treats Bayesian bootstrap distributions as baseline inferential objects~\citep{newton2021weighted,ballante2025generalized}, which captures data-driven uncertainty of the empirical distribution.
Model parameter-specific prior information is then incorporated through generalized Bayesian updating~\citep{bissiri2016general,nie2023deep}, also known as Gibbs inference~\citep{overstall2025gibbs}, where the bootstrap MC samples are reweighted using a loss function together with a prior penalty. Adding this step in our problem context introduces prior-guided posterior updating, effectively transforming the bootstrap distribution into a generalized posterior~\citep{lyddon2019general}. 

\subsection{Survival modeling setup}

We implement the proposed method for survival analysis framework, i.e., for modeling time-to-event data using survival models such as the Cox proportional hazard~\citep{hosmer2008applied}. 
Unlike Bayesian survival models, which require specification of a full probabilistic model and derive inference directly from an explicit likelihood, we use the proposed method to decouple uncertainty quantification, first through the Bayesian bootstrap, and then refining via loss-based Gibbs reweighting using model parameter specific prior information, which measures variability in the observed risk sets and censoring structure. 
This provides an  attractive alternative to fully Bayesian inference, particularly in large-scale or complex survival settings~\citep{minsker2014scalable}.

The proposed framework is general and not restricted to any specific survival model. It can be applied to a broad range of estimators and survival functions, including hazard functions, cumulative incidence functions, and regression parameters from semiparametric or parametric models. As a concrete example, in this paper, we use the Cox proportional hazards model~\citep{hosmer2008applied}. This yields a flexible and robust alternative to Bayesian Cox regression while retaining interpretability in terms of hazard ratios. We have also developed an \texttt{R} package \texttt{BayesBoots} to implement the proposed method for Cox regression model, which is currently available to download from \href{https://github.com/ksbakar/BayesBoots}{GitHub} repository.

\subsection{Organization}

The rest of the paper is organized as follows: in Section~\ref{sec:methods}, we develop the method for Cox model, proposing an algorithm to implement the method. Section~\ref{sec:BB_Cox_Bayes_relation} provides asymptotic assumptions, theorems and related proofs to compare the proposed method with the conventional Bayesian inference for Cox model. In Section~\ref{sec:sim}, we conduct simulation studies and then Section~\ref{sec:ex} provides data example related to kidney allograft failure. Finally, in Section~\ref{sec:con} we include discussions and conclusions. 

\section{Generalized Bayesian-bootstrap Cox model (BB-Cox)}\label{sec:methods}

In this section, we develop the generalized Bayesian-bootstrap Cox model (BB-Cox) and establish its relationship with Bayesian Cox regression. The Cox proportional hazards model provides a natural setting for this development because it permits covariate effects to be estimated without specification of the baseline hazard and obtains inference through the partial likelihood~\citep{cox1972regression}. 
Let us first define the Cox proportional hazards model as: $h(t\mid X_i) = h_0(t)\exp(X_i^\top\beta)$, where $h_0(t)$ is an unspecified baseline hazard function and $\beta\in\mathbb{R}^p$ is the vector of regression coefficients~\citep{cox1972regression,hosmer2008applied}. Let $\mathcal{R}(T_i)$ denote the risk set immediately before the observed failure time $T_i$. For notational consistency, we denote the Cox partial log-likelihood by
\begin{eqnarray}
\ell_n(\beta) = \sum_{i:\delta_i=1} \left[X_i^\top\beta-\log\left\{\sum_{j\in\mathcal{R}(T_i)}\exp(X_j^\top\beta)\right\}\right],
\label{eq:cox_loglik}
\end{eqnarray}
and its negative by $\mathcal{L}_n(\beta)=-\ell_n(\beta)$. The ordinary Cox partial likelihood estimator is therefore $\hat{\beta} = \arg\max_{\beta}\ell_n(\beta)$, or  $\hat{\beta} = \arg\min_{\beta}\mathcal{L}_n(\beta)$.
The distinction between $\ell_n(\beta)$ and $\mathcal{L}_n(\beta)$ is useful throughout the paper: $\ell_n(\beta)$ denotes the log partial likelihood when discussing Bayesian posterior distributions, whereas $\mathcal{L}_n(\beta)$ denotes the corresponding negative objective when defining the weighted Cox estimators.

\subsection{Bayesian-bootstrap distribution}
\label{sec:bb}

The proposed method combines Bayesian-bootstrap perturbations~\citep{rubin1981bayesian,lo1987large} of the Cox estimator with generalized Bayesian updating. The Bayesian bootstrap provides a nonparametric mechanism for propagating empirical-distribution uncertainty without requiring a parametric model for the baseline hazard~\citep{lo1993bayesian,newton1994approximate}.
For each bootstrap replication $m$, where $m=1,\ldots,M$; we independently generate weights: $w_1^{(m)},\ldots,w_n^{(m)} \sim \text{Dirichlet}(1,\ldots,1)$.
These weights define a perturbed empirical distribution. The corresponding weighted negative Cox partial log-likelihood is
\begin{eqnarray}
\mathcal{L}^{(m)}_n(\beta) = -\sum_{i:\delta_i=1} w_i^{(m)}\left[X_i^\top\beta-\log\left\{\sum_{j\in\mathcal{R}(T_i)}w_j^{(m)}\exp(X_j^\top\beta)\right\}\right],
\label{eq:weighted_cox_loss}
\end{eqnarray}
and the associated Bayesian-bootstrap Cox estimator is $\beta^{(m)} = \arg\min_{\beta}\mathcal{L}^{(m)}_n(\beta)$. 
Repeating this procedure for $m=1,\ldots,M$ produces $\{\beta^{(m)}\}_{m=1}^{M}$,
which represents the empirical Bayesian-bootstrap distribution of the Cox regression coefficient~\citep{chen2012monte}. We define
\begin{eqnarray}
Q_{\text{BB}}(\beta) = \frac{1}{M}\sum_{m=1}^{M}\delta_{\beta^{(m)}}(\beta),
\label{eq:qbb}
\end{eqnarray}
where $\delta_{\beta^{(m)}}$ denotes the Dirac probability measure concentrated at $\beta^{(m)}$. The Bayesian bootstrap provides a data-driven representation of sampling uncertainty associated with the Cox estimator~\citep{kim2003bayesian,lo1993bayesian}, although its use avoids placing a parametric prior directly on the baseline hazard or the regression parameter.

\subsection{Gibbs updating}

The second stage applies generalized Bayesian updating to the empirical measure $Q_{\text{BB}}$. Generalized Bayesian inference constructs probability distributions by exponentiating a loss function rather than requiring the loss to arise from a fully specified likelihood~\citep{bissiri2016general}. This provides a natural framework for incorporating prior information while retaining the empirical uncertainty representation generated by the Bayesian bootstrap. We now define a loss function on the coefficient space by $L(\beta,\beta') = \sum_{k=1}^{p}(\beta_k-\beta_k')^2=\|\beta-\beta'\|_2^2$.
For each bootstrap coefficient $\beta^{(m)}$, the corresponding empirical risk is
\begin{eqnarray}
\widetilde{\ell}^{(m)} = \sum_{j=1}^{M}L\!\left(\beta^{(m)},\beta^{(j)}\right).
\label{eq:empirical_risk}
\end{eqnarray}
Thus, $\widetilde{\ell}^{(m)}$ measures the discrepancy between the $m$th bootstrap coefficient and the ensemble of bootstrap coefficient estimates.
The Gibbs normalized weight assigned to $\beta^{(m)}$ is $\pi_m \propto \exp\{-\lambda\widetilde{\ell}^{(m)}-\Omega(\beta^{(m)})\}$,
where $\lambda>0$ is the Gibbs learning parameter and $\Omega(\beta)$ is a prior-induced penalty. We connect the penalty to a Bayesian Cox prior through $\pi(\beta) \propto \exp\{-\Omega(\beta)\}$.
For example, a Gaussian prior on $\beta$ corresponds to a quadratic penalty. Hence, the resulting BB-Cox generalized posterior is
\begin{eqnarray}
Q_{\text{BB-Cox}}(\beta) = \sum_{m=1}^{M} \pi_m \delta_{\beta^{(m)}}(\beta).
\label{eq:bb-cox}
\end{eqnarray}
Thus, BB-Cox does not generate a new continuous parameter distribution independently of the bootstrap samples. Instead, it redistributes probability mass over the Bayesian-bootstrap coefficient realizations according to their empirical risk and prior penalty. The resulting measure is therefore a generalized Bayesian posterior of the Bayesian-bootstrap empirical distribution.
The use of Gibbs updating has previously been implemented for Cox regression~\citep{mcgree2025approach}; however, the approach we develop in this paper differs in that the Gibbs update is applied to a Bayesian-bootstrap distribution of Cox coefficient estimates rather than directly to the Cox parameter space through the partial likelihood.

\subsection{Learning parameter calibration}

The choice of $\lambda$ determines the relative concentration of the empirical risk. The calibration of Gibbs learning parameters has received considerable attention in generalized Bayesian inference~\citep{wu2023comparison}. Existing approaches include calibration based on posterior coverage and predictive performance~\citep{bissiri2016general}, as well as data-driven procedures based on empirical risk and cross-validation~\citep{syring2019calibrating,zhang2006from}.
Because $\widetilde{\ell}^{(m)}$ is an aggregate pairwise loss over $M$ bootstrap draws, its magnitude depends on the number of Monte Carlo samples. We therefore use the flexible scaling $\lambda = c/M^\alpha$, $\alpha\in[0.5,1]$, 
which gives
\begin{eqnarray}
\lambda\widetilde{\ell}^{(m)} = cM^{-\alpha} \sum_{j=1}^{M} L \left(\beta^{(m)},\beta^{(j)}\right).
\label{eq:scaled_empirical_risk}
\end{eqnarray}
This scaling controls the dependence of the Gibbs weights on the Monte Carlo sample size and provides a practical mechanism for stabilizing the empirical-risk contribution. Related considerations concerning learning-rate selection and posterior calibration arise in generalized Bayesian inference~\citep{bissiri2016general,syring2019calibrating,jewson2018principles}.

In the numerical implementation, we use $c=1$ and $\alpha=0.5$, with the resulting learning parameter specified as $\lambda=M^{-1/2}$. This choice is used consistently in the simulation and data analyses. The asymptotic results explained later in this paper are stated separately in terms of a local calibration condition on the \emph{effective Gibbs loss}; this condition should not be interpreted as requiring the particular finite-$M$ implementation $\lambda=M^{-1/2}$ to converge numerically to one.

\subsection{BB-Cox algorithm}

Algorithm~\ref{alg:gb_bbcox} emphasizes the two conceptually distinct components of BB-Cox. First, the Bayesian bootstrap generates an empirical distribution of perturbed Cox coefficient estimates. Second, generalized Bayesian updating redistributes probability mass over these estimates through the empirical risk and prior penalty. The resulting distribution provides a unified basis for point estimation and uncertainty quantification.
\begin{algorithm} 
\caption{Generalized Bayesian-Bootstrap Cox Model (BB-Cox)}
\label{alg:gb_bbcox}
\begin{algorithmic}[1]
\REQUIRE Observations 
$\mathcal{D} = \{(T_i,\delta_i,X_i)\}_{i=1}^n$,
number of bootstrap samples $M$,
learning parameter $\lambda$,
loss function $L(\cdot,\cdot)$,
and prior penalty $\Omega(\cdot)$
\FOR{$m = 1,\dots,M$}
\STATE Draw Dirichlet weights $(w_1^{(m)},\dots,w_n^{(m)}) \sim \text{Dirichlet}(1,\dots,1)$
\STATE Compute weighted Cox estimator $\beta^{(m)} = \arg\min_{\beta} \ell^{(m)}(\beta)$
\ENDFOR
\FOR{$m = 1,\dots,M$}
\STATE Compute empirical risk,  $$\tilde{\ell}^{(m)} = \sum_{j} L\!\left(\beta^{(m)},\beta^{(j)}\right),\quad \text{where, } L\!\left(\beta^{(m)},\beta^{(j)} \right) = \sum_{k=1}^p \left(\beta_k^{(m)}-\beta_k^{(j)}\right)^2$$
\STATE Compute normalized Gibbs weight, $\pi_m \propto \exp\{-\lambda \tilde{\ell}^{(m)} - \Omega(\beta^{(m)})\}$. 
\ENDFOR
\STATE Form the generalized Bayesian posterior of Cox, i.e., BB-Cox,
$${Q}_{\text{BB-Cox}}(\beta) = \sum_{m=1}^M \pi_m \delta_{\beta^{(m)}}(\beta),$$ where $\delta(.)$ is the Dirac measure.
\STATE Compute posterior summaries: generalized posterior mean $\hat{\beta}_{\text{BB-Cox}}$, highest posterior density (HPD) credible intervals from ${Q}_{\text{BB-Cox}}(\beta)$.
\end{algorithmic}
\end{algorithm}
The generalized posterior mean is obtained as
\begin{eqnarray}
\hat{\beta}_{\text{BB-Cox}} = \int \beta\,dQ_{\text{BB-Cox}}(\beta) = \sum_{m=1}^{M} \pi_m\beta^{(m)}.
\label{eq:bbcox_mean}
\end{eqnarray}
Posterior uncertainty can be summarized using highest posterior density (HPD) credible intervals~\citep{box2011bayesian}. For a scalar coefficient $\beta_k$, a $100(1-\alpha)\%$ HPD interval is a shortest interval $[a_k,b_k]$ satisfying
\begin{eqnarray}
Q_{\text{BB-Cox}}\left(a_k\leq\beta_k\leq b_k\right) \geq 1-\alpha, \quad k=1,\ldots,p 
\label{eq:hpd}
\end{eqnarray}
with all points within the interval having posterior density at least as large as points outside the interval. This construction is particularly useful when the resulting generalized posterior is asymmetric or otherwise deviates from a Gaussian approximation.
For interpretation under the Cox model, inference is reported on the hazard-ratio scale: $\text{HR}_{\text{BB-Cox}}=\exp\left(\hat{\beta}_{\text{BB-Cox}}\right)$, with corresponding credible intervals obtained by applying the monotone transformation $\exp(\cdot)$ to the coefficient distribution.

\section{BB-Cox and Bayesian Cox Regression}
\label{sec:BB_Cox_Bayes_relation}

The relationship between BB-Cox and Bayesian Cox regression can be understood by viewing both procedures as probability measures over the regression coefficient $\beta$. Bayesian Cox regression combines the Cox partial likelihood with a prior distribution according to Bayes' theorem~\citep{ibrahim2001bayesian},
\begin{eqnarray}
p(\beta\mid\mathcal{D}) \propto \exp\{\ell_n(\beta)\}\pi(\beta).
\label{eq:bayes_cox}
\end{eqnarray}
Whereas, BB-Cox ($Q_{\text{BB-Cox}}$) constructs the empirical Bayesian-bootstrap measure $Q_{\text{BB}}$ and applies Gibbs reweighting to the bootstrap coefficient realizations. The two procedures therefore differ in construction, but they can share the same first-order asymptotic behavior under appropriate regularity and calibration conditions.
In this section, we will provide some key theorems and related proof where we will show that under certain conditions $Q_{\text{BB-Cox}}(\beta) \approx
Q_{\text{BB}}(\beta)$, $Q_{\text{BB}}(\beta) \approx p(\beta\mid\mathcal{D})$, and $Q_{\text{BB-Cox}}(\beta) \approx p(\beta\mid\mathcal{D})$.

\subsection{Asymptotic assumptions}

We formalize this comparison using the following assumptions.

\begin{assumption}[Cox regularity and local asymptotic normality]
\label{ass:cox_regular}

The Cox proportional hazards model satisfies the standard regularity conditions ensuring consistency and asymptotic normality of the partial-likelihood estimator. In particular,
$$
\sqrt{n}(\hat{\beta}-\beta_0) \Rightarrow N\left(0,I^{-1}(\beta_0)\right).
$$
Moreover, the Cox partial log-likelihood satisfies the local asymptotic normality expansion
$$
\ell_n \left(\hat{\beta}+\frac{h}{\sqrt n}\right)-\ell_n(\hat{\beta})= h^\top Z_n-\frac12h^\top I(\beta_0)h+o_p(1),
$$
uniformly for $h$ in compact subsets of $\mathbb{R}^p$, where 
$$
Z_n \Rightarrow N\left(0,I(\beta_0)\right).
$$
These are standard large-sample conditions for likelihood-based inference in the Cox model~\citep{andersen2012statistical}.
\end{assumption}

\begin{assumption}[Validity of the Bayesian bootstrap]
\label{ass:bb_valid}

The Bayesian-bootstrap weighted Cox estimator admits the same first-order influence-function representation as the ordinary Cox estimator. Consequently,
$$
\mathcal{L}_{Q_{\text{BB}}} \left\{\sqrt n(\beta-\hat{\beta})\right\}\Rightarrow N\left(0,I^{-1}(\beta_0)\right).
$$
\end{assumption}

\begin{assumption}[Locally regular prior]
\label{ass:local_prior}

The prior density satisfies $\pi(\beta)>0$ in a neighbourhood of $\beta_0$, and $\log\pi(\beta)$ is continuously differentiable in that neighbourhood. Moreover, for every finite $C>0$,
$$
\sup_{\|h\|\leq C}\left|\log\pi\left(\hat{\beta}+\frac{h}{\sqrt n}\right) - \log\pi(\hat{\beta}) \right| \rightarrow_p0.
$$
This is the local prior condition under which the prior is asymptotically dominated by the Cox partial likelihood, as in standard Bernstein--von Mises theory~\citep{van1998asymptotic,kleijn2012bernstein}.
\end{assumption}

\begin{assumption}[Local Gibbs calibration]
\label{ass:gibbs_calibration}

The effective Gibbs loss is locally calibrated to the Cox partial likelihood in the sense that, for every finite $C>0$,
$$
\sup_{\|h\|\leq C}\left|\lambda\widetilde{\ell}\left(\hat{\beta}+\frac{h}{\sqrt n}\right) - \mathcal{L}_n \left(\hat{\beta}+\frac{h}{\sqrt n}\right) \right| \rightarrow_p0,
$$
up to an additive term that is constant in $h$ and therefore disappears after normalization of the Gibbs distribution.
This assumption concerns the local geometry induced by the empirical-risk/Gibbs construction. It does not require the finite-$M$ computational choice $\lambda=M^{-1/2}$ to converge to one. Rather, it states the asymptotic calibration required for the effective Gibbs weighting to reproduce the local Cox likelihood geometry.
\end{assumption}

\begin{assumption}[Non-negligible prior regime]
\label{ass:informative_prior}

For an informative prior regime, suppose that
$$
\Omega\left(\hat{\beta}+\frac{h}{\sqrt n}\right)-\Omega(\hat{\beta})\rightarrow \omega(h),
$$
uniformly on compact subsets of $\mathbb{R}^p$, where $\omega(h)$ is finite and non-constant. Thus, the prior contribution remains non-negligible on the local asymptotic scale.
\end{assumption}

\subsection{First-order equivalence of Bayesian bootstrap and Bayesian Cox}

\begin{theorem}[Bayesian bootstrap--Bayesian Cox Bernstein--von Mises equivalence]
\label{thm:bb_bvm}

Under Assumptions~\ref{ass:cox_regular}--\ref{ass:local_prior},
$$
\mathcal{L}_{p}\left\{\sqrt n(\beta-\hat{\beta})\mid\mathcal{D}\right\} \Rightarrow N\left(0,I^{-1}(\beta_0)\right),
$$
$$
\mathcal{L}_{Q_{\text{BB}}}\left\{\sqrt n(\beta-\hat{\beta})\right\} \Rightarrow N\left(0,I^{-1}(\beta_0)\right).
$$
Consequently,
$$
Q_{\text{BB}}(\beta) \approx p(\beta\mid\mathcal{D}),
$$
in the sense of first-order Bernstein--von Mises equivalence.
\end{theorem}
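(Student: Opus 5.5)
The plan is to prove the two displayed limits separately and then combine them. The second limit, for $Q_{\text{BB}}$, is exactly Assumption~\ref{ass:bb_valid}. All the real work is in the first limit, which is a Bernstein--von Mises statement for the Cox partial-likelihood posterior in \eqref{eq:bayes_cox}. Once both hold, the two laws of $\sqrt n(\beta-\hat\beta)$ converge to the same Gaussian limit $N(0,I^{-1}(\beta_0))$. By the triangle inequality for the bounded-Lipschitz metric, or for total variation in the posterior case, their distance tends to zero in probability. That is precisely the claimed first-order equivalence $Q_{\text{BB}}(\beta)\approx p(\beta\mid\mathcal{D})$.

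For the posterior limit, I would change variables to $h=\sqrt n(\beta-\hat\beta)$. The density of $h$ under $p(\cdot\mid\mathcal D)$ is proportional to
$$
\exp\Big\{\ell_n\big(\hat\beta+h/\sqrt n\big)-\ell_n(\hat\beta)\Big\}\,\frac{\pi(\hat\beta+h/\sqrt n)}{\pi(\hat\beta)} .
$$
On a compact set $\|h\|\le C$, Assumption~\ref{ass:cox_regular} gives the log-likelihood ratio as $h^\top Z_n-\tfrac12 h^\top I(\beta_0)h+o_p(1)$. Assumption~\ref{ass:local_prior} gives the log prior ratio as $o_p(1)$.

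Here I will centre at $\hat\beta$. Since $\hat\beta$ maximizes $\ell_n$, the score at $\hat\beta$ vanishes. The linear term $h^\top Z_n$ must therefore be $o_p(1)$ uniformly on compacts. This follows by comparing the expansion at $h$ with the expansion at $-h$, or by noting that a quadratic with an $O_p(1)$ linear term cannot be maximized at $0$ unless $Z_n=o_p(1)$. The local log density is then $-\tfrac12 h^\top I(\beta_0)h+o_p(1)$ uniformly on compacts. This yields uniform convergence of the unnormalized density to the $N(0,I^{-1}(\beta_0))$ kernel on each compact set.

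The key obstacle is controlling the tails $\|h\|>C$, equivalently showing posterior mass concentrates on $\sqrt n$-neighbourhoods of $\hat\beta$. This does not follow from local statements alone. I would handle it in two regions, following the standard BvM argument in \citet{van1998asymptotic}.

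First, for moderate deviations $C<\|h\|\le \epsilon\sqrt n$, I would use concavity of the Cox partial log-likelihood in $\beta$: it is a sum of linear terms minus log-sum-exp terms. Concavity lets the quadratic bound on the sphere $\|h\|=C$ extend outward. Specifically, for $\|h\|\ge C$, $\ell_n(\hat\beta+h/\sqrt n)-\ell_n(\hat\beta)\le (\|h\|/C)\sup_{\|g\|=C}\{\ell_n(\hat\beta+g/\sqrt n)-\ell_n(\hat\beta)\}\le -c\|h\|C$ with high probability. This gives exponential tail decay, and it is dominated since $\pi$ is bounded near $\beta_0$ by continuity.

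Second, for $\|\beta-\hat\beta\|>\epsilon$, I would use the same concavity to get $\ell_n(\beta)-\ell_n(\hat\beta)\le -c n\epsilon\|\beta-\hat\beta\|$. I would also need $\pi$ to be a proper prior. This is implicit in the paper's setting and I would state it explicitly. The outer mass is then $O(e^{-cn})$, negligible relative to the $n^{-p/2}$ normalizer.

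With these two bounds, dominated convergence gives $L_1$ convergence of the normalized density to the Gaussian. This is the total-variation BvM statement, which implies the first display. I expect the concavity-based tail bound, and making explicit the mild properness and boundedness conditions on $\pi$ that Assumption~\ref{ass:local_prior} leaves implicit, to be the main technical step. Everything else is a direct reading of the assumptions.
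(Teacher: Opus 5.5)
Your argument has the same skeleton as the paper's proof: the compact-set expansion of the log posterior from Assumptions~\ref{ass:cox_regular} and~\ref{ass:local_prior}, a Bernstein--von Mises conclusion, Assumption~\ref{ass:bb_valid} taken verbatim for $Q_{\text{BB}}$, and a common Gaussian limit. Where the paper stops at ``standard Bernstein--von Mises arguments'', you supply the missing tail control, using concavity of the Cox partial log-likelihood and an explicit properness condition on $\pi$. That tail control is what turns local kernel convergence into convergence of the normalized posterior.

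Your handling of the linear term is also more careful than the paper's. The paper keeps the kernel $\exp\{h^\top Z_n-\tfrac12 h^\top I(\beta_0)h\}$, which normalizes to $N(I^{-1}(\beta_0)Z_n,\,I^{-1}(\beta_0))$, not $N(0,I^{-1}(\beta_0))$. It is your observation, that maximality of $\hat\beta$ forces $Z_n=o_p(1)$, that justifies the centred limit. You should say explicitly, though, that this contradicts the stated $Z_n\Rightarrow N(0,I(\beta_0))$, so Assumption~\ref{ass:cox_regular} as written is inconsistent. Read it either as an expansion at $\hat\beta$ with no linear term, or as LAN at $\beta_0$ followed by recentring via $\sqrt n(\hat\beta-\beta_0)=I^{-1}(\beta_0)Z_n+o_p(1)$.

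One step overclaims. The outer-region bound $\ell_n(\beta)-\ell_n(\hat\beta)\le -cn\epsilon\|\beta-\hat\beta\|$ needs curvature on a fixed sphere $\|\beta-\hat\beta\|=\epsilon$, i.e.\ at $\|h\|=\epsilon\sqrt n\to\infty$. The purely local Assumption~\ref{ass:cox_regular} does not provide that; you would need a global condition such as uniform convergence of $n^{-1}\ell_n$ to a limit uniquely maximized at $\beta_0$.

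You do not need it. Your moderate-deviation concavity bound already holds for \emph{every} $\|h\|\ge C$, giving $\ell_n(\beta)-\ell_n(\hat\beta)\le -cC\sqrt n\,\|\beta-\hat\beta\|$. With a proper prior, the unnormalized mass outside an $\epsilon$-ball is then $O(e^{-cC\epsilon\sqrt n})=o(n^{-p/2})$, which is negligible against the normalizer. So the two-region split collapses into a single argument.
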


\begin{proof}

By Assumption~\ref{ass:cox_regular}, the Cox partial log-likelihood admits the local expansion
$$
\ell_n\left(\hat{\beta}+\frac{h}{\sqrt n}\right)-\ell_n(\hat{\beta})=h^\top Z_n-\frac12h^\top I(\beta_0)h+o_p(1).
$$
By Assumption~\ref{ass:local_prior}, the local prior contribution satisfies
$$
\log\pi\left(\hat{\beta}+\frac{h}{\sqrt n}\right)-\log\pi(\hat{\beta})=o_p(1)
$$
uniformly for bounded $h$. Hence, the local posterior kernel is asymptotically proportional to
$$
\exp\left\{h^\top Z_n-\frac12h^\top I(\beta_0)h\right\}.
$$
Standard Bernstein--von Mises arguments therefore yield
$$
\sqrt n(\beta-\hat{\beta}) \mid\mathcal D \Rightarrow N\left(0,I^{-1}(\beta_0)\right).
$$
The corresponding Bayesian-bootstrap result follows directly from Assumption~\ref{ass:bb_valid}. Since both local distributions converge to the same Gaussian limit, the Bayesian bootstrap and Bayesian Cox posterior are first-order Bernstein--von Mises equivalent.
\end{proof}

\begin{corollary}[Locally regular Bayesian bootstrap approximation]
\label{cor:bb_bayes}

For a non-informative or weakly informative prior satisfying Assumption~\ref{ass:local_prior},
$$
Q_{\text{BB}}(\beta) \approx p(\beta\mid\mathcal{D}).
$$
\end{corollary}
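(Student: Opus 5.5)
The corollary is a specialization of Theorem~\ref{thm:bb_bvm}, so the plan is to check that a non-informative or weakly informative prior meets the theorem's hypotheses and then read off the conclusion. Assumptions~\ref{ass:cox_regular} and~\ref{ass:bb_valid} do not involve the prior; they concern only the Cox partial likelihood and the Dirichlet-weighted estimator. They are therefore taken as standing hypotheses. The only work is to confirm that the prior class in the corollary is covered by Assumption~\ref{ass:local_prior}. Since the corollary explicitly requires the prior to satisfy Assumption~\ref{ass:local_prior}, this is essentially by hypothesis.

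It is still worth making explicit why "weakly informative" fits. I would take $\log\pi$ to be continuously differentiable near $\beta_0$. I would also use consistency of $\hat\beta$ from Assumption~\ref{ass:cox_regular}, so that $\hat\beta$ lies in that neighbourhood with probability tending to one. A mean-value bound then gives
\[
\sup_{\|h\|\le C}\left|\log\pi\left(\hat\beta+\frac{h}{\sqrt n}\right)-\log\pi(\hat\beta)\right|\le \frac{C}{\sqrt n}\sup_{\|b-\beta_0\|\le \epsilon}\|\nabla\log\pi(b)\|\rightarrow_p 0 .
\]
This shows that a fixed, $n$-independent smooth prior (flat, or Gaussian with fixed variance) is locally negligible. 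It also clarifies the contrast with Assumption~\ref{ass:informative_prior}: a prior whose $\Omega$ scales with $n$ produces a nonzero $\omega(h)$ and falls outside the corollary.

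With the hypotheses verified, apply Theorem~\ref{thm:bb_bvm}. Both $\sqrt n(\beta-\hat\beta)$ under $p(\cdot\mid\mathcal D)$ and under $Q_{\text{BB}}$ converge to $N(0,I^{-1}(\beta_0))$. To state the equivalence as a single statement, apply the triangle inequality in a metric that metrizes weak convergence, such as bounded Lipschitz distance, or total variation if the posterior BvM is obtained in that norm. This gives $d\bigl(\mathcal L_p\{\sqrt n(\beta-\hat\beta)\mid\mathcal D\},\mathcal L_{Q_{\text{BB}}}\{\sqrt n(\beta-\hat\beta)\}\bigr)\rightarrow_p 0$, which is the meaning of $Q_{\text{BB}}(\beta)\approx p(\beta\mid\mathcal D)$.

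The only delicate point is the metric. $Q_{\text{BB}}$ is a discrete measure when $M$ is finite, so total variation cannot be used for it. I would state the approximation in the bounded Lipschitz metric, interpreting $Q_{\text{BB}}$ as its $M\to\infty$ limit, which converges by Glivenko--Cantelli for the i.i.d.\ draws $\beta^{(m)}$. Beyond this, no new estimate is needed.
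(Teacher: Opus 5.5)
Your proposal is correct and takes the paper's route. The paper states the corollary with no separate proof, as an immediate specialization of Theorem~\ref{thm:bb_bvm} to priors satisfying Assumption~\ref{ass:local_prior}, with Assumptions~\ref{ass:cox_regular}--\ref{ass:bb_valid} as standing hypotheses. Your additions are precision the paper leaves implicit, not a different argument: the mean-value check that fixed smooth priors are locally negligible, and reading ``$\approx$'' via the bounded-Lipschitz triangle inequality with $Q_{\text{BB}}$ taken as its $M\to\infty$ limit.
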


\subsection{BB-Cox under locally regular priors}

We next establish the corresponding result for BB-Cox.

\begin{theorem}[BB-Cox under a locally regular prior]
\label{thm:bbcox_local}

Suppose Assumptions~\ref{ass:cox_regular}--\ref{ass:gibbs_calibration} hold and the prior satisfies Assumption~\ref{ass:local_prior}. Then
$$
Q_{\text{BB-Cox}}(\beta) \approx Q_{\text{BB}}(\beta),
$$
and consequently
$$
Q_{\text{BB-Cox}}(\beta) \approx p(\beta\mid\mathcal{D}).
$$
The equivalences are understood in the first-order local asymptotic sense.
\end{theorem}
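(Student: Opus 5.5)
The plan is to mirror the proof of Theorem~\ref{thm:bb_bvm}: localize everything on the scale $h=\sqrt n(\beta-\hat{\beta})$, identify the local kernel of the Gibbs reweighting, and show that it collapses to the Cox partial-likelihood kernel. The first conclusion is $Q_{\text{BB-Cox}}\approx Q_{\text{BB}}$. The second then follows by transitivity, because Theorem~\ref{thm:bb_bvm} already gives $Q_{\text{BB}}\approx p(\beta\mid\mathcal{D})$, and all three local laws share the limit $N(0,I^{-1}(\beta_0))$. Throughout, "$\approx$" means that the local laws of $\sqrt n(\beta-\hat{\beta})$ converge weakly to the same Gaussian.

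\textbf{Local form of the Gibbs kernel.} Write the BB-Cox weights as a density-type reweighting of $Q_{\text{BB}}$, with $\pi_m\propto \exp\{-\lambda\widetilde{\ell}(\beta^{(m)})-\Omega(\beta^{(m)})\}$. Evaluate at $\beta=\hat{\beta}+h/\sqrt n$ and subtract the value at $h=0$, since constants vanish under normalization. By Assumption~\ref{ass:gibbs_calibration}, the loss part equals $\mathcal{L}_n(\hat{\beta}+h/\sqrt n)-\mathcal{L}_n(\hat{\beta})+o_p(1)$ uniformly on $\|h\|\le C$. By Assumption~\ref{ass:cox_regular}, this is $-h^\top Z_n+\tfrac12 h^\top I(\beta_0)h+o_p(1)$. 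Because $\Omega=-\log\pi$ up to a constant, Assumption~\ref{ass:local_prior} makes the prior part $o_p(1)$ uniformly on compacts. The local log-kernel is therefore $h^\top Z_n-\tfrac12 h^\top I(\beta_0)h+o_p(1)$, which is exactly the kernel that appeared for $p(\beta\mid\mathcal{D})$ in the proof of Theorem~\ref{thm:bb_bvm}.

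\textbf{Passage to the limit.} Assumption~\ref{ass:bb_valid} says the base measure $Q_{\text{BB}}$, localized, is asymptotically $N(0,I^{-1}(\beta_0))$. I would treat the effective Gibbs construction as in Assumption~\ref{ass:gibbs_calibration}, namely as producing a law whose local log-density relative to Lebesgue measure is the calibrated loss plus the prior term. Computed on compacts, the kernel above then yields a Gaussian with mean $I^{-1}(\beta_0)Z_n$ and covariance $I^{-1}(\beta_0)$.

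To fix the centering, I would use the first-order condition at $\hat{\beta}$. Since $\hat{\beta}$ maximizes $\ell_n$, expanding around $\hat{\beta}$ forces $Z_n=o_p(1)$ in the LAN expansion centred there. So the limit is $N(0,I^{-1}(\beta_0))$, matching $Q_{\text{BB}}$.

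To pass from uniform convergence on compacts to weak convergence, I would use the standard Bernstein--von Mises tightness argument. Show the normalized reweighted mass outside $\{\|h\|\le C\}$ is small for large $C$, then let $C\to\infty$. A bounded-Lipschitz or total-variation comparison on compacts finishes the step.

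\textbf{Main obstacle.} The hard step is the tail control. Assumptions~\ref{ass:gibbs_calibration} and~\ref{ass:local_prior} hold only on compacts, so I need the Gibbs factor not to push mass to $\|h\|>C$. I would first try a quadratic lower bound on the empirical risk. For squared-error loss, $\widetilde{\ell}(\beta)=M\|\beta-\bar\beta\|^2+\text{const}$, where $\bar\beta$ is the bootstrap mean, so $\exp\{-\lambda\widetilde{\ell}\}\le 1$ after centering. Hence the reweighting can only shrink the tails of $Q_{\text{BB}}$ relative to its centre, and tightness is inherited from Assumption~\ref{ass:bb_valid} provided the normalizing constant stays bounded away from zero. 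That lower bound comes from the compact-set expansion.

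A secondary subtlety is the meaning of "$\approx$" for the discrete measure with finite $M$. I would state the result in the iterated limit: first $M\to\infty$, so that $Q_{\text{BB}}$ and its reweighting approximate their population versions by the law of large numbers for self-normalized importance weights, and then $n\to\infty$.
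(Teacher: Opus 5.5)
Your outline follows the paper's argument step for step: the prior exponent is flattened by Assumption~\ref{ass:local_prior}, the Gibbs exponent is replaced by the Cox kernel via Assumption~\ref{ass:gibbs_calibration}, and the rest follows from Theorem~\ref{thm:bb_bvm}. Writing it out in detail exposes a step that fails. The paper's sentence ``the Gibbs reweighting does not change the first-order local distribution'' passes over the same point.

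The weights $\pi_m$ sit on atoms $\beta^{(m)}$ drawn from $Q_{\text{BB}}$. So $\exp\{-\lambda\widetilde{\ell}-\Omega\}$ is, up to normalization, the density of $Q_{\text{BB-Cox}}$ \emph{relative to $Q_{\text{BB}}$}, not relative to Lebesgue measure. Your tail-control paragraph uses exactly this reading, but ``Passage to the limit'' drops the base measure. Done consistently, for instance in your own $M\to\infty$ limit where self-normalized weights converge to the $Q_{\text{BB}}$-reweighted law, the local shape of $Q_{\text{BB-Cox}}$ is the Gaussian shape of $Q_{\text{BB}}$ from Assumption~\ref{ass:bb_valid} times the Gibbs factor from Assumption~\ref{ass:gibbs_calibration}:
\[
\exp\bigl\{-\tfrac12 h^\top I(\beta_0)h\bigr\}\cdot\exp\bigl\{h^\top Z_n-\tfrac12 h^\top I(\beta_0)h+o_p(1)\bigr\}.
\]
With your $Z_n=o_p(1)$, this is $N\bigl(0,\tfrac12 I^{-1}(\beta_0)\bigr)$, not $N\bigl(0,I^{-1}(\beta_0)\bigr)$. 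The partial-likelihood information is counted twice, once through the bootstrap spread and once through the Gibbs factor. More generally, any Gibbs factor that is non-constant on the $h$-scale changes the local law of $Q_{\text{BB}}$. Assumption~\ref{ass:gibbs_calibration} forces a non-constant factor, so it cannot be the route to $Q_{\text{BB-Cox}}\approx Q_{\text{BB}}$. Your Lebesgue-density reading would be right only for importance-corrected weights $\pi_m\propto\exp\{-\lambda\widetilde{\ell}^{(m)}-\Omega(\beta^{(m)})\}/q_{\text{BB}}(\beta^{(m)})$, with $q_{\text{BB}}$ the bootstrap density, and that is not what the algorithm computes.

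What actually delivers the first claim is local \emph{flatness} of the Gibbs factor. By your identity $\widetilde{\ell}(\beta)=M\|\beta-\bar\beta\|^2+\text{const}$, you get $\lambda\widetilde{\ell}(\hat{\beta}+h/\sqrt n)-\lambda\widetilde{\ell}(\hat{\beta})=(\lambda M/n)\{\|h\|^2-2h^\top d_n\}$ with $d_n=\sqrt n(\bar\beta-\hat{\beta})$. This is $o_p(1)$ uniformly on $\|h\|\le C$ whenever $\lambda M=o(n)$ and $d_n=O_p(1)$. Together with Assumption~\ref{ass:local_prior}, the reweighting ratio is then $1+o_p(1)$ uniformly on compacts. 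Your tail bound then gives tightness: the Gibbs factor is at most one after centring at $\bar\beta$, and you also need a bounded prior so that $\exp\{\Omega(\hat{\beta})-\Omega(\beta)\}=O_p(1)$, which you did not address. Hence $Q_{\text{BB-Cox}}\approx Q_{\text{BB}}$, and Theorem~\ref{thm:bb_bvm} gives the rest. That is a complete argument, but it runs under a condition opposite to Assumption~\ref{ass:gibbs_calibration}, not one derived from it.

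Two smaller points. First, deriving $Z_n=o_p(1)$ from maximality at $\hat{\beta}$ is correct, but it contradicts the literal $Z_n\Rightarrow N(0,I(\beta_0))$ in Assumption~\ref{ass:cox_regular}; say explicitly that you are correcting that expansion. Second, your ``$M\to\infty$ first'' limit exists only if $\lambda M$ converges. With the paper's $\lambda=M^{-1/2}$, $\lambda\widetilde{\ell}=\sqrt M\|\beta-\bar\beta\|^2+\text{const}$, and the weights collapse onto the atom nearest $\bar\beta$.
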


\begin{proof}

Consider the local parameterization
$$
\beta_h = \hat{\beta}+\frac{h}{\sqrt n}.
$$
Under Assumption~\ref{ass:local_prior},
$$
\Omega(\beta_h)-\Omega(\hat{\beta}) = o_p(1)
$$
uniformly for bounded $h$, since $\pi(\beta)\propto\exp\{-\Omega(\beta)\}$.
Thus, the prior term contributes an asymptotically constant multiplicative factor to the local Gibbs weights and disappears upon normalization. By Assumption~\ref{ass:gibbs_calibration}, the effective Gibbs loss has the same local geometry as the Cox negative partial log-likelihood, up to terms that are asymptotically negligible or constant in $h$. Therefore, the Gibbs reweighting does not change the first-order local distribution generated by the Bayesian-bootstrap coefficient estimates. Hence,
$$
Q_{\text{BB-Cox}} \approx Q_{\text{BB}}.
$$
The result then follows by Theorem~\ref{thm:bb_bvm}.
\end{proof}
\noindent
Thus, for a locally regular prior,
\begin{eqnarray}
Q_{\text{BB-Cox}}(\beta) \approx Q_{\text{BB}}(\beta) \approx p(\beta\mid\mathcal{D}).
\label{eq:threeway_local}
\end{eqnarray}
This is an asymptotic statement and these three procedures remain distinct at finite samples because they are constructed from different probability measures and different updating mechanisms.

\subsection{BB-Cox with a non-negligible prior}

The preceding equivalence changes when prior information remains non-negligible on the local asymptotic scale. In this setting, the Bayesian bootstrap alone does not contain the prior contribution and therefore need not reproduce the Bayesian Cox posterior.
Under Assumption~\ref{ass:informative_prior},
$$
\Omega
\left(\hat{\beta}+\frac{h}{\sqrt n}\right)-\Omega(\hat{\beta})\rightarrow\omega(h),
$$
where $\omega(h)$ is non-constant. Consequently, the prior modifies the local posterior kernel. Combining the LAN expansion with the prior contribution gives the local Bayesian Cox kernel
$$
\exp\left\{h^\top Z_n-\frac12h^\top I(\beta_0)h-\omega(h)\right\}.
$$
Unlike the locally regular case, this limiting distribution need not be Gaussian because $\omega(h)$ may alter the local shape of the posterior.

\begin{theorem}[BB-Cox with a non-negligible prior]
\label{thm:bbcox_info}

Suppose Assumptions~\ref{ass:cox_regular}, \ref{ass:bb_valid}, \ref{ass:gibbs_calibration}, and~\ref{ass:informative_prior} hold, and suppose that the penalty $\Omega(\beta)$ used in BB-Cox is the same penalty satisfying
$$
\pi(\beta)\propto\exp\{-\Omega(\beta)\}
$$
in the Bayesian Cox model. Then BB-Cox and Bayesian Cox have the same first-order local posterior kernel:
$$
Q_{\text{BB-Cox}}(\beta) \approx p(\beta\mid\mathcal{D}).
$$
Whereas,
$$
Q_{\text{BB}}(\beta) \not\approx p(\beta\mid\mathcal{D})
$$
in general, because $Q_{\text{BB}}$ does not incorporate the non-negligible prior contribution.
\end{theorem}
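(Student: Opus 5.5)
The plan is to work in the local parameterization $\beta_h=\hat{\beta}+h/\sqrt n$ and compare the two local log-kernels term by term, exactly as in the proofs of Theorem~\ref{thm:bb_bvm} and Theorem~\ref{thm:bbcox_local}, except that the prior term is no longer discarded.

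\textbf{The Bayesian Cox side.} Combining the LAN expansion of Assumption~\ref{ass:cox_regular} with Assumption~\ref{ass:informative_prior} gives the local log-kernel
$$
\log p(\beta_h\mid\mathcal D)-\log p(\hat{\beta}\mid\mathcal D)=h^\top Z_n-\tfrac12 h^\top I(\beta_0)h-\omega(h)+o_p(1),
$$
uniformly on compacts.

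\textbf{The BB-Cox side.} I would write the Gibbs log-weight at a point $\beta_h$ as $-\lambda\widetilde{\ell}(\beta_h)-\Omega(\beta_h)$. By Assumption~\ref{ass:gibbs_calibration}, $-\lambda\widetilde{\ell}(\beta_h)$ equals $-\mathcal L_n(\beta_h)=\ell_n(\beta_h)$ up to $o_p(1)$ and an $h$-free constant. Applying Assumption~\ref{ass:cox_regular} again yields $h^\top Z_n-\tfrac12h^\top I(\beta_0)h$. Since the same $\Omega$ is used, Assumption~\ref{ass:informative_prior} contributes $-\omega(h)$. After normalization, the BB-Cox local kernel therefore coincides with the Bayesian Cox local kernel up to $o_p(1)$, uniformly on compacts.

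\textbf{Passing from kernels to measures.} I would argue as in standard BvM proofs: uniform convergence of log-densities on compacts gives convergence of the normalized measures restricted to balls $\|h\|\le C$. Tightness, meaning that mass outside large balls is negligible, then lets $C\to\infty$ and delivers total-variation (or weak) closeness. For $p(\cdot\mid\mathcal D)$, tightness comes from the quadratic term dominating, provided $\omega$ grows at most subquadratically or is bounded below. For $Q_{\text{BB-Cox}}$, it comes from Assumption~\ref{ass:bb_valid}, which makes the base draws $\sqrt n(\beta^{(m)}-\hat{\beta})$ tight.

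\textbf{Main obstacle.} The hardest step is that $Q_{\text{BB-Cox}}$ is a discrete reweighting of the base measure $Q_{\text{BB}}$, not a density on its own. Its local shape is therefore (base density) $\times$ (Gibbs factor). The base density is locally Gaussian, $\propto \exp\{-\tfrac12 h^\top I h\}$ around $\hat\beta$, so naive multiplication would count the curvature twice.

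I would handle this by reading Assumption~\ref{ass:gibbs_calibration} as a statement about the \emph{effective} Gibbs loss, meaning the combined log-density of the reweighted measure relative to Lebesgue measure. Under that reading the base density is absorbed, and the calibrated effective loss reproduces $\mathcal L_n$. This is the same interpretation implicitly used in Theorem~\ref{thm:bbcox_local}, and I would state it explicitly. I would also take $M\to\infty$ first, so the empirical weighted sum converges to the corresponding integral by the law of large numbers for self-normalized importance weights. That step needs the weights to have finite second moment, which holds since $\exp(-\omega)$ is bounded on compacts.

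\textbf{The negative claim.} By Theorem~\ref{thm:bb_bvm}, $Q_{\text{BB}}$ is locally $N(0,I^{-1}(\beta_0))$, whose log-kernel is $h^\top Z_n-\tfrac12h^\top Ih$ up to constants. It differs from the Bayesian Cox kernel by $-\omega(h)$. Because $\omega$ is non-constant, the two normalized limits differ: equal normalized densities would force $\omega$ to be constant. A concrete witness is a Gaussian prior with precision scaled by $n$, giving $\omega(h)=\tfrac12 h^\top\Sigma^{-1}h+a^\top h$. This shifts and contracts the limit, so $Q_{\text{BB}}\not\approx p(\beta\mid\mathcal D)$ in general.
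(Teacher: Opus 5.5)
Your core argument matches the paper's. The Bayesian local kernel $\exp\{h^\top Z_n-\frac12h^\top I(\beta_0)h-\omega(h)\}$ comes from Assumptions~\ref{ass:cox_regular} and~\ref{ass:informative_prior}. The BB-Cox kernel comes from Assumption~\ref{ass:gibbs_calibration} plus the shared penalty $\Omega$. The negative claim comes from $Q_{\text{BB}}$ lacking $\omega$.

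Your double-counting point goes beyond the paper, and it is correct. The paper's proof equates the BB-Cox local kernel with the Gibbs factor $\exp\{-\lambda\widetilde{\ell}-\Omega\}$. It ignores that this factor reweights draws whose local law is already $N(0,I^{-1}(\beta_0))$. Read literally, Assumption~\ref{ass:gibbs_calibration} would give BB-Cox local precision $2I(\beta_0)$ (before $\omega$). It would also contradict the claim in the proof of Theorem~\ref{thm:bbcox_local} that the Gibbs reweighting leaves the local distribution of $Q_{\text{BB}}$ unchanged.

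Your effective reading is the coherent one. Stated plainly, it says $\lambda\widetilde{\ell}(\beta_h)$ is asymptotically constant in $h$ on compacts, so $Q_{\text{BB}}$ supplies the curvature and $\Omega$ supplies $\omega$. The conclusion then follows from Assumptions~\ref{ass:bb_valid} and~\ref{ass:informative_prior}, by weak convergence of measures reweighted by uniformly bounded, locally uniformly convergent factors. You do not need Lebesgue densities at all.

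The step that fails as written is ``take $M\to\infty$ first'' via a law of large numbers for self-normalized importance weights. The Gibbs weights are not a fixed function of the individual draws. With $\bar\beta=M^{-1}\sum_j\beta^{(j)}$, we have $\sum_j\|\beta^{(m)}-\beta^{(j)}\|^2=M\|\beta^{(m)}-\bar\beta\|^2+S$, where $S$ does not depend on $m$. Hence
\[
\pi_m\propto\exp\bigl\{-cM^{1-\alpha}\|\beta^{(m)}-\bar\beta\|^2-\Omega(\beta^{(m)})\bigr\}.
\]
For $\alpha<1$, which includes the paper's $\alpha=1/2$, $cM^{1-\alpha}$ diverges as $M\to\infty$ at fixed $n$. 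Then $Q_{\text{BB-Cox}}$ collapses to a point mass at the limit of $\bar\beta$, and there is no reweighted integral for it to converge to.

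In local coordinates the Gibbs term is $cM^{1-\alpha}\|h-\bar h\|^2/n$ plus a constant, with $\bar h=\sqrt n(\bar\beta-\hat\beta)$. The limits must therefore be joint: $M\to\infty$ with $M^{1-\alpha}/n\to0$, which is automatic for $\alpha=1$. That joint condition is exactly what makes your reading of Assumption~\ref{ass:gibbs_calibration} true for the actual construction. Also, consistency of self-normalized weights needs integrability, not second moments. It also needs $e^{-\Omega}$ bounded globally, not just on compacts.

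Two smaller points. First, $\hat\beta$ maximizes $\ell_n$, so the linear term of the expansion at $\hat\beta$ vanishes and $Z_n=o_p(1)$. This is what justifies identifying the $N(0,I^{-1}(\beta_0))$ kernel of $Q_{\text{BB}}$ with $h^\top Z_n-\frac12h^\top I(\beta_0)h$ in your negative claim. It also conflicts with $Z_n\Rightarrow N(0,I(\beta_0))$ in Assumption~\ref{ass:cox_regular}, an inconsistency you inherit from the paper. Second, tail control for $p(\cdot\mid\mathcal D)$ cannot come from ``the quadratic term dominating'', because the LAN expansion holds only on compacts. You need a global input such as concavity of the Cox partial log-likelihood.
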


\begin{proof}

Under Assumption~\ref{ass:cox_regular}, the Cox likelihood contribution on the local scale is
$$
\exp\left\{h^\top Z_n-\frac12h^\top I(\beta_0)h\right\}.
$$
Under Assumption~\ref{ass:informative_prior}, the prior contributes the non-constant factor
$$
\exp\{-\omega(h)\}.
$$
Therefore, the local Bayesian Cox posterior is proportional to
$$
\exp\left\{h^\top Z_n-\frac12h^\top I(\beta_0)h-\omega(h)\right\}.
$$
By Assumption~\ref{ass:gibbs_calibration}, the effective Gibbs loss used by BB-Cox reproduces the local Cox likelihood geometry. Because BB-Cox uses the same penalty $\Omega$, its local Gibbs kernel contains the same prior contribution $\exp\{-\omega(h)\}$. Consequently, BB-Cox and Bayesian Cox have the same first-order local posterior kernel.
The Bayesian bootstrap distribution, however, is generated solely by Dirichlet perturbations of the empirical distribution and contains no corresponding prior penalty. Therefore, when $\omega(h)$ is non-constant, $Q_{\text{BB}}$ does not generally have the same local limit as the Bayesian Cox posterior.
\end{proof}

The theorem clarifies an important distinction between the three procedures. The Bayesian bootstrap captures uncertainty arising from empirical-distribution perturbations, whereas BB-Cox retains this bootstrap uncertainty while allowing the Gibbs step to introduce prior information. Bayesian Cox regression incorporates the same prior information directly through Bayes' theorem. Thus, when the prior has a non-negligible local contribution, BB-Cox can reproduce the first-order prior-modified posterior geometry provided that the Gibbs loss is appropriately calibrated and the same prior penalty is used.

\section{Simulation study}\label{sec:sim}

In this section, we conduct a simulation study to compare the estimates obtained from the BB-Cox model with those from the frequentist and Bayesian Cox models. We further investigate the impact of different prior specifications on the BB-Cox estimates, and check the asymptotic equivalence of BB-Cox and Bayesian Cox models.

\subsection{Simulation design}

For the BB-Cox model, the data were generated from a Cox proportional hazards model with a binary treatment covariate. The treatment parameter was fixed at $\beta_{\text{true}} = 0.7$, representing a moderate treatment effect. Independent right-censoring times were generated to reflect a realistic survival setting. 
Each simulated dataset consisted of sample sizes $n \in \{50, 100, 150, 200\}$, representing small to moderately large studies. We use Gibbs posterior tuning parameter $\lambda=1/\sqrt{M}$, where $M$ is the number of bootstrap samples.
The simulation framework also considered three prior configurations: (i) scenario - 1: a non-informative prior, representing minimal prior information using $N(0,10^4)$; (ii) scenario - 2: a weakly informative prior as $N(0,3)$; and (iii) scenario - 3: an informative prior $N(0.7,0.5)$, introducing stronger prior influence. These priors were designed to assess the sensitivity of the proposed estimator to prior misspecification and varying degrees of prior strength.
For each combination of $n$ and scenarios, survival datasets were generated independently, and the BB-Cox, frequentist and Bayesian Cox models were applied. This process has been replicated for 1,000 times. Performance was evaluated using the root mean squared error (RMSE) and mean absolute error (MAE) matrices of the estimated regression coefficient $\hat{\beta}$ relative to the true value $\beta_{\text{true}} = 0.7$. For Bayesian cox, we used 20,000 Markov chain Monte Carlo (MCMC) samples to fit the model with 2 chains and first 10,000 observations were used as burn-in. We then derived the posterior summaries from the Bayesian Cox model and compared with the proposed BB-Cox approach for different prior scenarios. 

\subsection{Simulation results}

Figure~\ref{fig:bbcox_sim_valstat} presents a comparison of estimation accuracy for the Cox regression coefficient $\beta = 0.7$. We evaluate the accuracy using the root mean squared error (RMSE) on the left panels and the mean absolute error (MAE) on the right panels. 
Across all scenarios and methods, a clear and consistent pattern emerges: both RMSE and MAE decrease as the sample size increases. This is evident from the downward trend of all curves. For example, in Scenario 1, RMSE decreases from approximately 0.45 at a sample size of 50 to about 0.19 at a sample size of 200, and MAE shows a similar decline from around 0.36 to 0.15. This pattern demonstrates that all three estimators are consistent.

In scenario 1, i.e., for non-informative prior (top row of Figure~\ref{fig:bbcox_sim_valstat}), the three methods perform almost identically. Any differences are very small, with Bayesian Cox and BB-Cox showing only a marginal improvement over the classical Cox model at the smallest sample size, suggesting that when conditions are relatively well-behaved, all three approaches yield comparable estimates of the true value of $\beta = 0.7$. In scenario 2, the results are again very similar across methods, but with slightly more noticeable differences at smaller sample sizes. At a sample size of 50, the Bayesian Cox and BB-Cox methods exhibit slightly lower RMSE (around 0.35) compared to the Cox model (slightly higher). The same pattern appears in the MAE panel. However, as the sample size increases to 150 and 200, these differences become negligible. Scenario 3 that represents  informative prior, highlights the most noticeable differences between the methods. At the smallest sample size ($n = 50$), the Bayesian Cox and BB-Cox approaches perform better than the frequentist Cox model. This indicates that, under the informative prior scenario, the classical Cox estimator has higher variability when data are limited. However, as the sample size increases, the model improves rapidly. 
Comparing RMSE and MAE across all panels of Figure~\ref{fig:bbcox_sim_valstat} shows consistent patterns. RMSE, which penalizes larger errors more heavily, and MAE, which reflects average absolute deviation, both lead to the same ranking of methods in each scenario. This consistency indicates that the observed differences are not driven by a few extreme outliers but rather reflect systematic differences in estimation performance.
\begin{figure}[!htp]
    \centering
    \includegraphics[width=0.98\textwidth]{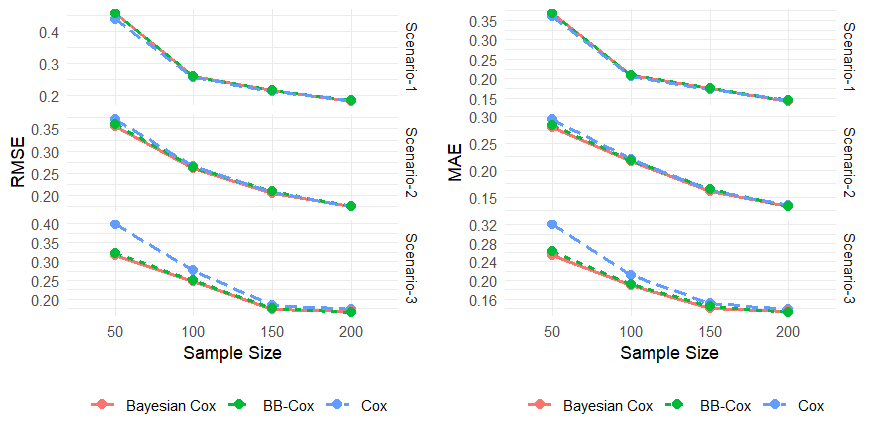}
    \caption{Comparison between Cox and BB-Cox Gibbs posterior of coefficient $\beta$ using root mean squared error (RMSE) performance metric for Gibbs tuning parameters $\lambda=0.1,1$ and $2$ three different prior survival curves scenarios: (1) non-informative $N(0,10^4)$, (2) weakly-informative $N(0,3)$ and (3) informative priors $N(0.7,0.5)$.}
    \label{fig:bbcox_sim_valstat}
\end{figure}

We also evaluated the empirical coverage probabilities of the nominal 95\% intervals under the three prior specifications (see Table~\ref{tab:coverage}). Across all scenarios and sample sizes, BB-Cox and Bayesian Cox show nearly indistinguishable coverage, with differences generally no larger than one percentage point. Coverage remains close to the nominal level throughout, typically ranging from 0.88 to 0.96. These results indicate that BB-Cox provides reliable uncertainty quantification, closely reproducing the coverage properties of Bayesian Cox inference while maintaining satisfactory frequentist performance across the settings considered. We have also seen that the computation time for both Cox and BB-Cox is less than one minute, whereas for Bayesian Cox using Stan programming language it was on average 3 minutes using Windows OS (Intel(R) Core(TM) Ultra 7 165U (1.70 GHz), 32.0 GB RAM).

\begin{table}[!htp]
\centering
\caption{Coverage probabilities across methods, scenarios, and sample sizes. Average computation time is also provided.}\label{tab:coverage}
\centering
\begin{tabular}[t]{lrrrr}\toprule
Scenario & Sample Size & BB-Cox & Bayesian Cox & Cox\\
\midrule
Scenario-1 & 50 & 0.92 & 0.94 & 0.94\\
Scenario-1 & 100 & 0.92 & 0.94 & 0.95\\
Scenario-1 & 150 & 0.93 & 0.95 & 0.95\\
Scenario-1 & 200 & 0.93 & 0.95 & 0.95\\
\addlinespace
Scenario-2 & 50 & 0.94 & 0.95 & 0.95\\
Scenario-2 & 100 & 0.93 & 0.94 & 0.95\\
Scenario-2 & 150 & 0.92 & 0.95 & 0.95\\
Scenario-2 & 200 & 0.92 & 0.93 & 0.94\\
\addlinespace
Scenario-3 & 50 & 0.95 & 0.97 & 0.96\\
Scenario-3 & 100 & 0.95 & 0.96 & 0.95\\
Scenario-3 & 150 & 0.95 & 0.96 & 0.96\\
Scenario-3 & 200 & 0.95 & 0.96 & 0.96\\
\midrule
\multicolumn{2}{l}{Computation time (average)} &  $<1$ min& 5-min &  $<1$ min\\
\bottomrule
\end{tabular}
\end{table}

We further explore the distributions of the regression coefficient estimates from the Bayesian Cox and BB-Cox methods across $R=1,000$ simulated datasets in Figure~\ref{fig:bbcox_sim_estimates}. For BB-Cox, each analysis is based on $M=1,000$ Bayesian bootstrap replicates. 
We see that both estimators are approximately unbiased, with sampling distributions centered near the true parameter value. We can also see that the estimation variability decreases with sample size.
Across all sample sizes, BB-Cox and Bayesian Cox display nearly identical centers and spreads. Thus, the BB-Cox reproduces the principal finite-sample efficiency gains of Bayesian Cox regression while retaining a bootstrap-based construction, which is consistent with the asymptotic relationship established in Section~\ref{sec:BB_Cox_Bayes_relation}.

\begin{figure}[!htp]
    \centering
    \includegraphics[width=0.96\textwidth]{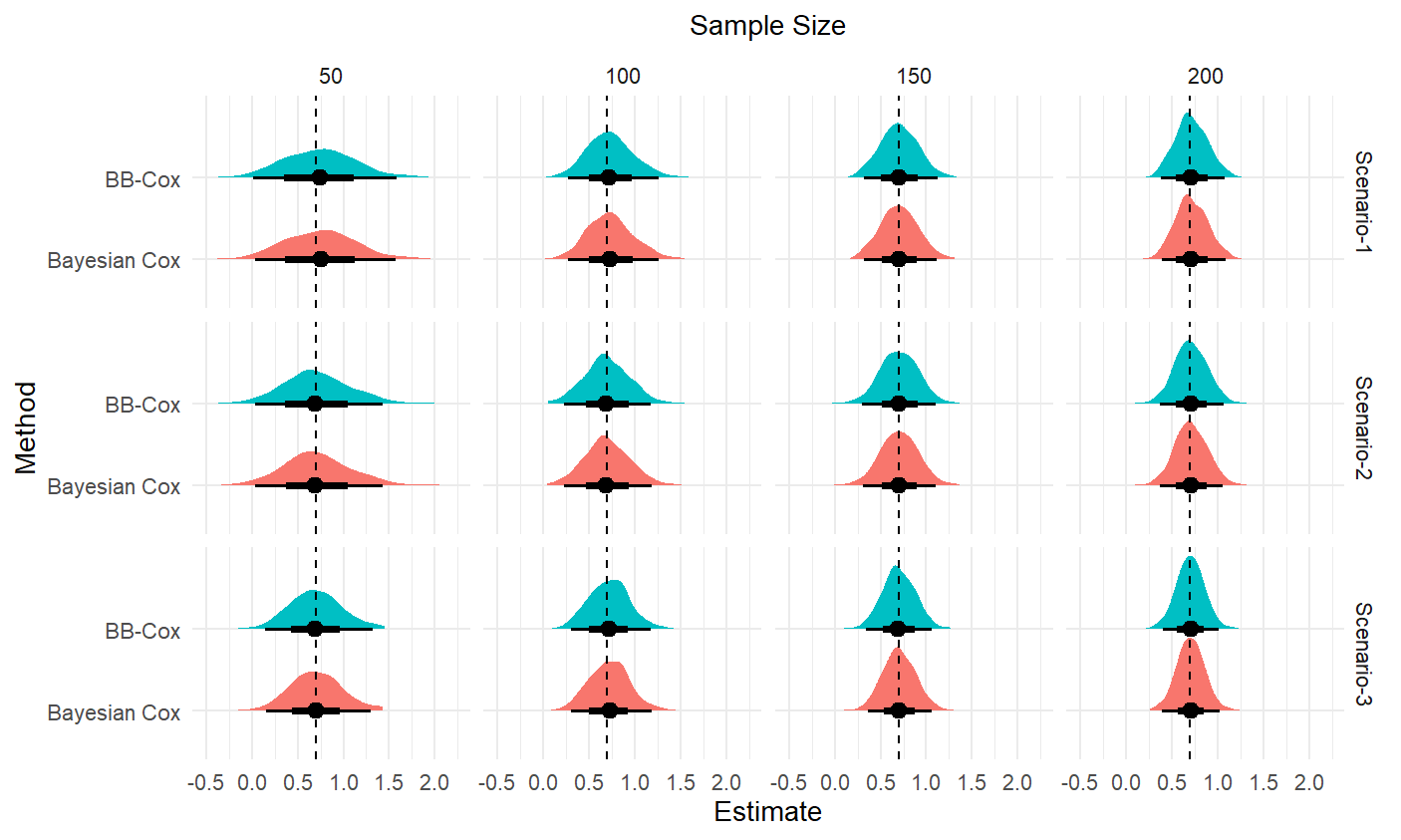}
    \caption{Distribution of $\beta$ from BB-Cox and Bayesian Cox models under three prior scenarios: (1) non-informative $N(0,10^4)$, (2) weakly informative $N(0,3)$, and (3) informative $N(0.7,0.5)$, across sample sizes $n=50,100,150,$ and $200$.}
    \label{fig:bbcox_sim_estimates}
\end{figure}


\section{Example}\label{sec:ex}

We illustrate the proposed method using kidney transplantation data from the Australia and New Zealand Dialysis and Transplant (ANZDATA) Registry. This application demonstrates the utility of the proposed approach in a registry-based survival analysis setting with clinically relevant donor risk stratification and right-censored time-to-event outcomes. The cohort includes adult recipients undergoing their first kidney transplant between January 2005 and December 2019, with follow-up available until December 2020. 
The outcome is time to 3-year all-cause allograft loss, defined as graft failure from any cause within three years after transplantation. Recipients without an event are right censored at three years, last follow-up, or study end, whichever occurs first.

The exposure of interest is donor kidney quality measured using the kidney donor risk index (KDRI), a validated measure of donor risk based on donor characteristics and their association with graft failure~\citep{rao2009comprehensive}. Following recent applications of donor risk stratification~\citep{zhu2025survival}, KDRI is dichotomised using the 90th percentile as the cut-off, with donor kidneys at or above this threshold classified as high-KDRI and those below classified as low-KDRI.
The association between KDRI category and allograft loss is estimated using the proposed adjusted BB-Cox model. The model includes KDRI category as the primary exposure and adjusts for recipient gender, coronary artery disease, peripheral vascular disease, cerebrovascular disease, diabetes status, donor-specific human leukocyte antigen (HLA) antibody mismatches and ethnicity group~\citep{prunster2023kidney}. The parameter of interest is the hazard ratio comparing recipients receiving high-KDRI versus low-KDRI donor kidneys after adjustment for these covariates. 
Only patients with a first graft loss were included, yielding a cohort of 15,454 participants. After excluding those with missing KDRI data, the final study cohort consisted of 9,463 patients, who were included in the final analysis. Over the study period, 961 (10.15\%) recipients lost their allografts within 3 years post-transplant. 

\subsection{Results}

Table~\ref{tab:results} presents the estimated hazard ratios (HR) for the proposed BB-Cox, the Bayesian Cox and frequentist Cox models. Here we used a non-informative prior $N(0,10^2)$ for the model parameters for the BB-Cox and Bayesian Cox models. As the primary exposure variable, KDRI demonstrated a consistent effect with the outcome across all three models. Compared with recipients of kidneys with low KDRI, those receiving high-KDRI kidneys had approximately a twofold higher hazard of allograft loss within three years (BB-Cox: HR = 2.08, 95\% BCI: 1.57-2.46; Bayesian Cox: HR = 2.09, 95\% BCI: 1.64-2.59; frequentist Cox: HR = 2.10, 95\% CI: 1.77-2.49). The estimated HR and corresponding uncertainty intervals were similar across the three modeling approaches, indicating that the proposed BB-Cox model produces results comparable to those obtained from the frequentist Cox and conventional Bayesian models under non-informative prior, while requiring substantially less computation than the Bayesian Cox model.

\begin{table}[!htp]
\centering
\caption{Model based estimates of hazard ratio (HR) from BB-Cox, Bayesian Cox and frequentist Cox models with outcome variable as
patients lost their allografts within 3 years post-transplant. Here, BCI refers to the Bayesian highest posterior density credible interval and CI as frequentist confidence interval. Non-informative prior $N(0,10^2)$ is considered for the BB-Cox and Bayesian Cox model estimation. First category of each variable is considered as the reference category.}\label{tab:results}
\centering
\begin{tabular}[t]{llccc} \toprule
\multicolumn{2}{c}{Variables} & BB-Cox & Bayesian Cox & Cox\\ 
Characteristics & Categories  & HR (95\% BCI) & HR (95\% BCI) & HR (95\% CI) \\ \midrule
KDRI & Low & - & - & - \\ 
     & High & 2.08 (1.57, 2.46) & 2.09 (1.64, 2.59) & 2.10 (1.77, 2.49)\\ \addlinespace
Gender & Female & - & - & - \\ 
 & Male & 1.09 (0.94, 1.29) & 1.09 (0.95, 1.26) & 1.09 (0.95, 1.26)\\ \addlinespace
Coronary artery & No & - & - & - \\  
disease & Yes & 1.20 (0.94, 1.45) & 1.20 (0.98, 1.43) & 1.20 (1.00, 1.44)\\ \addlinespace
Peripheral vascular & No & - & - & - \\ 
disease & Yes & 1.12 (0.80, 1.49) & 1.13 (0.78, 1.52) & 1.12 (0.89, 1.42)\\ \addlinespace
Cerebrovascular & No & - & - & -\\
disease  & Yes & 1.44 (1.00, 1.99) & 1.43 (1.00, 1.98) & 1.44 (1.09, 1.90)\\ \addlinespace
Diabetes & No & - & - & - \\ 
  & Yes & 1.19 (1.00, 1.39) & 1.18 (1.00, 1.40) & 1.18 (1.02, 1.38)\\ \addlinespace
HLA-DR & 0 & - & - & - \\ 
mismatches & 1 & 1.33 (1.07, 1.65) & 1.32 (1.08, 1.69) & 1.32 (1.10, 1.58)\\ 
 & 2 & 1.40 (1.13, 1.71) & 1.40 (1.12, 1.77) & 1.39 (1.17, 1.66)\\ \addlinespace
Ethnicity & Caucasian & - & - & - \\ 
 & Asian & 0.82 (0.66, 1.02) & 0.82 (0.65, 1.01) & 0.83 (0.67, 1.02)\\ 
 & Indigenous & 1.41 (1.00, 1.86) & 1.40 (1.00, 1.79) & 1.43 (1.10, 1.85)\\ 
 & Maori & 1.12 (0.62, 1.85) & 1.12 (0.58, 1.85) & 1.16 (0.73, 1.84)\\ 
 & Others & 0.82 (0.68, 0.99) & 0.80 (0.67, 0.98) & 0.82 (0.69, 0.98)\\ \midrule
\multicolumn{2}{l}{Computation time} & $<1$ min & $>5$ min & $<1$ min \\ 
\bottomrule
\end{tabular}
\end{table}


Sometimes, researchers are interested in examining whether the effect of the KDRI persists among patients whose kidney allografts survive beyond a certain period after transplantation. To investigate this, separate analyses were conducted by restricting the study population to patients whose allografts remained functional for more than 5, 10, and 15 years post-transplant. Table~\ref{tab:results2} presents the HR estimates from the BB-Cox model for these three survival thresholds. Across all analyses, recipients of kidneys with a high KDRI consistently experienced a significantly higher risk of allograft loss than those receiving kidneys with a low KDRI. Specifically, the estimated HRs were 1.9 (95\% BCI: 1.54-2.34), 1.79 (95\% BCI: 1.49-2.10), and 1.81 (95\% BCI: 1.49-2.12) for patients with graft survival exceeding 5, 10, and 15 years, respectively. The similarity of these estimates suggests that the adverse effect of a high KDRI on long-term allograft failure remains relatively stable, even among recipients whose grafts have already survived for many years after transplantation.

\begin{table}[!htp]
\centering
\caption{BB-Cox based HR estimates for patients with allograft lost more than 5, 10 and 15 years of post-transplant. Non-informative prior $N(0,10^2)$ is considered for the BB-Cox model. Here, BCI refers to the Bayesian highest posterior density credible interval, and the first category of each variable is considered as the reference category.}\label{tab:results2}
\centering
\begin{tabular}[t]{llccc} \toprule
\multicolumn{2}{c}{Variables} & 5-year & 10-year & 15-year \\ 
Characteristics & Categories  & HR (95\% BCI) & HR (95\% BCI) & HR (95\% CI) \\ \midrule
KDRI & Low & - & - & - \\ 
     & High & 1.90 (1.54, 2.34) & 1.79 (1.49, 2.10) & 1.81 (1.49, 2.12)\\ \addlinespace
Gender & Female & - & - & - \\ 
 & Male & 1.10 (0.94, 1.29) & 1.09 (0.96, 1.23) & 1.11 (0.96, 1.24)\\ \addlinespace
Coronary artery & No & - & - & - \\  
disease & Yes & 1.24 (0.94, 1.51) & 1.27 (1.08, 1.50) & 1.36 (1.13, 1.58)\\ \addlinespace
Peripheral vascular & No & - & - & - \\ 
disease & Yes & 1.18 (0.90, 1.57) & 1.21 (0.97, 1.48) & 1.17 (0.94, 1.45)\\ \addlinespace
Cerebrovascular & No & - & - & -\\
disease  & Yes & 1.47 (0.99, 2.00) & 1.35 (0.95, 1.70) & 1.42 (1.06, 1.77)\\ \addlinespace
Diabetes & No & - & - & - \\ 
  & Yes & 1.15 (0.97, 1.38) & 1.17 (1.01, 1.34) & 1.15 (1.00, 1.31)\\ \addlinespace
HLA-DR & 0 & - & - & - \\ 
mismatches & 1 & 1.29 (1.00, 1.59) & 1.22 (1.04, 1.41) & 1.20 (1.01, 1.40)\\ 
 & 2 & 1.37 (1.10, 1.68) & 1.29 (1.06, 1.49) & 1.27 (1.09, 1.48)\\ \addlinespace
Ethnicity & Caucasian & - & - & - \\ 
 & Asian & 0.79 (0.62, 1.04) & 0.85 (0.71, 1.05) & 0.86 (0.71, 1.02)\\ 
 & Indigenous & 1.59 (1.18, 2.12) & 1.63 (1.23, 2.09) & 1.60 (1.21, 2.10)\\ 
 & Maori & 1.19 (0.59, 1.86) & 1.17 (0.69, 1.75) & 1.22 (0.78, 1.74)\\ 
 & Others & 0.83 (0.66, 1.02) & 0.97 (0.84, 1.13) & 0.98 (0.83, 1.13)\\ 
\bottomrule
\end{tabular}
\end{table}

\subsection{Prior sensitivity}

To evaluate the sensitivity of the proposed BB-Cox framework to different prior assumptions, we consider prior configurations for the Cox regression coefficient vector $\beta \in \mathbb{R}^{p}$. We use the prior penalty function $\Omega(\beta)$ in the generalized Bayesian updating step with different shrinkage structures, such as the Horseshoe and Cauchy priors. These alternative prior specifications allow us to investigate the influence of different shrinkage assumptions on the estimation of $\beta$ and assess the robustness of the BB-Cox model under varying degrees of coefficient regularization.

Let us now define the Horseshoe and Cauchy priors under the BB-Cox model. 
For the Horseshoe prior, the hierarchical formulation is written as
\begin{eqnarray}
\beta_j \mid \lambda_j,\tau &\sim& N(0,\tau^2\lambda_j^2), \quad \lambda_j \sim C^+(0,1), \quad \tau \sim C^+(0,1),
\end{eqnarray}
where $C^+(0,1)$ denotes the standard half-Cauchy distribution, $\lambda_j$ is the local shrinkage parameter, and $\tau$ is the global shrinkage parameter. In this paper, when optimization is performed, we implement the commonly used approximation to the marginal Horseshoe penalty for sparse signals~\citep{carvalho2010horseshoe} as:
\begin{eqnarray}
\Omega_{\text{HS}}(\beta)&\propto&\sum_{j=1}^{p}\log\left(1+\frac{2\tau^2}{\beta_j^2}\right).
\end{eqnarray}
Now, for defining the Cauchy prior, $\beta_j \sim C(0,s)$, we write the density $p_{\text{C}}$ and corresponding penalty function ignoring constant terms as:
\begin{eqnarray}
p_{\text{C}}(\beta_j) = \frac{1}{\pi s\left(1+\beta_j^2/s^2\right)}; \quad \Omega_{\text{C}}(\beta) \propto \sum_{j=1}^{p}\log\left(1+\frac{\beta_j^2}{s^2}\right).
\end{eqnarray}

\begin{figure}[!htp]
    \centering
    \includegraphics[width=0.6\textwidth]{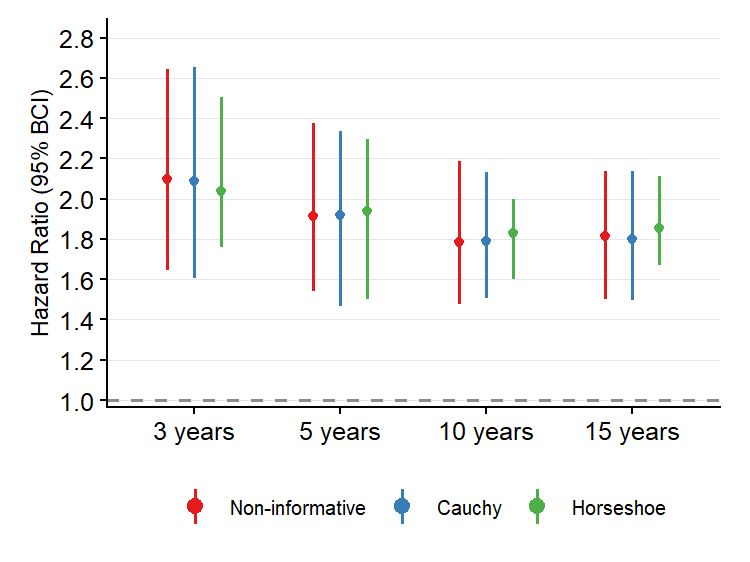}
    \caption{Estimated HRs from the BB-Cox model under non-informative, Cauchy and Horseshoe priors for allograft loss beyond 3, 5, 10, and 15 years post-transplant. Here, BCI refers to the Bayesian credible interval.}
    \label{fig:prior_sen}
\end{figure}

Figure~\ref{fig:prior_sen} presents the HR estimates for KDRI under different prior specifications and follow-up periods. We set $\tau=1$ for the Horseshoe prior and $s=2.5$ for the Cauchy prior. The results are generally consistent across priors, indicating that the association between KDRI and allograft loss is robust to prior assumptions. As expected, the Horseshoe prior provides stronger shrinkage compared with the non-informative and Cauchy priors, leading to slightly more conservative estimates. The HR estimates gradually decrease with longer post-transplant survival thresholds, suggesting a modest attenuation of the KDRI effect among patients with longer-lasting grafts. However, the association remains evident even beyond 10 and 15 years post-transplant.

\section{Conclusion and future work}\label{sec:con}

In this paper, we introduce a Bayesian bootstrap-induces framework for survival analysis that combines the flexibility of the Bayesian bootstrap with the ability to incorporate prior information through generalized Bayesian (Gibbs) updating. Unlike conventional Bayesian survival models, the proposed approach does not require specification of a full probabilistic likelihood and avoids potential restrictive distributional assumptions while maintaining a reasonable Bayesian interpretation. By identifying uncertainty obtained through Bayesian bootstrap perturbations and prior information incorporation via loss-based reweighting, the framework provides a computationally efficient and robust alternative for inference in survival analysis. 
The propose method can be applied to a broad class of survival estimators and as a concrete illustration, in this paper, we developed it for the Cox proportional hazard model. This approach preserves the familiar interpretation of hazard ratios while enabling the incorporation of prior knowledge in a likelihood-free manner. To facilitate practical application, the methodology has been implemented in the open-source \texttt{R} package \texttt{BayesBoots}, downloadable from \href{https://github.com/ksbakar/BayesBoots}{GitHub}. 

Several directions for future research remain. Although this paper focuses on Cox proportional hazards regression, the framework naturally extends to a broad range of survival models, including accelerated failure time models and frailty models~\citep{andersen2012statistical} or joint models for survival analysis~\citep{bakar2024dynamic,guo2026semiparametric}. 
Beyond regression parameters, future research could consider Bayesian bootstrap-induced inference for cumulative incidence functions~\citep{fine1999proportional} and restricted mean survival time~\citep{zhou2023transformation}, which have received increasing attention as clinically interpretable alternatives to hazard-based measures.
Another important direction of the proposed framework relates to the investigation on the choice of the loss function~\citep{bartlett2006convexity}, hierarchical parameters of the prior distribution and posterior predictive performance~\citep{gelman2013bayesian}. Furthermore, it is also useful to explore this approach on intractable likelihood~\citep{matsubara2024generalized} related to survival models. 
From a computational perspective, the proposed framework offers opportunities for inference in high-dimensional and large-scale survival settings. Future research could explore efficient algorithms, adaptive learning-rate selection~\citep{lyddon2019general}, sparse prior formulations for variable selection, and extensions to distributed or online computation~\citep{emek2011online, angelopoulos2024online}. These developments would further broaden the applicability of Bayesian bootstrap-induced generalized Bayesian inference as a flexible and practical alternative to conventional Bayesian survival analysis.

\section*{Acknowledgements}\label{ack}

The authors acknowledge the Centre for Kidney Research (CKR), Kids Research Institute, The Children's Hospital at Westmead, Sydney, New South Wales, Australia, for their support and for facilitating access to the ANZDATA registry. 

\bibliographystyle{chicago}
\bibliography{BayesianBoot_ref}

@book{andersen2012statistical,
  title={Statistical models based on counting processes},
  author={Andersen, Per K and Borgan, Ornulf and Gill, Richard D and Keiding, Niels},
  year={2012},
  publisher={Springer Science \& Business Media}
}

@article{angelopoulos2024online,
  title={Online computation with untrusted advice},
  author={Angelopoulos, Spyros and D{\"u}rr, Christoph and Jin, Shendan and Kamali, Shahin and Renault, Marc},
  journal={Journal of Computer and System Sciences},
  volume={144},
  pages={103545},
  year={2024},
  publisher={Elsevier}
}

@article{bakar2024dynamic,
  title={Dynamic prediction of kidney allograft and patient survival using post-transplant estimated glomerular filtration rate trajectory},
  author={Bakar, Khandoker Shuvo and Teixeira-Pinto, Armando and Gately, Ryan and Boroumand, Farzaneh and Lim, Wai H and Wong, Germaine},
  journal={Clinical Kidney Journal},
  volume={17},
  number={11},
  pages={sfae314},
  year={2024},
  publisher={Oxford University Press}
}

@article{ballante2025generalized,
  title={A generalized Bayesian ensemble survival tree (GBEST) model},
  author={Ballante, Elena and Muliere, Pietro and Figini, Silvia},
  journal={Statistics},
  volume={59},
  number={6},
  pages={1521--1536},
  year={2025},
  publisher={Taylor \& Francis}
}

@article{blei2017variational,
  title={Variational inference: A review for statisticians},
  author={Blei, David M and Kucukelbir, Alp and McAuliffe, Jon D},
  journal={Journal of the American statistical Association},
  volume={112},
  number={518},
  pages={859--877},
  year={2017},
  publisher={Taylor \& Francis}
}

@article{breslow1974covariance,
  title={Covariance analysis of censored survival data},
  author={Breslow, Norman},
  journal={Biometrics},
  volume={30},
  number={1},
  pages={89--99},
  year={1974},
  publisher={JSTOR}
}

@article{bartlett2006convexity,
  title={Convexity, classification, and risk bounds},
  author={Bartlett, Peter L and Jordan, Michael I and McAuliffe, Jon D},
  journal={Journal of the American Statistical Association},
  volume={101},
  number={473},
  pages={138--156},
  year={2006},
  publisher={Taylor \& Francis}
}

@article{bartovs2022informed,
  title={Informed Bayesian survival analysis},
  author={Barto{\v{s}}, Franti{\v{s}}ek and Aust, Frederik and Haaf, Julia M},
  journal={BMC Medical Research Methodology},
  volume={22},
  number={1},
  pages={238},
  year={2022},
  publisher={Springer}
}

@article{bissiri2016general,
  title={A general framework for updating belief distributions},
  author={Bissiri, Pier Giovanni and Holmes, Christopher C and Walker, Stephen G},
  journal={Journal of the Royal Statistical Society: Series B (Statistical Methodology)},
  volume={78},
  number={5},
  pages={1103--1130},
  year={2016},
  publisher={Wiley}
}

@book{box2011bayesian,
  title={Bayesian inference in statistical analysis},
  author={Box, George EP and Tiao, George C},
  year={2011},
  publisher={John Wiley \& Sons}
}

@article{carvalho2010horseshoe,
  title={The horseshoe estimator for sparse signals},
  author={Carvalho, Carlos M and Polson, Nicholas G and Scott, James G},
  journal={Biometrika},
  volume={97},
  number={2},
  pages={465--480},
  year={2010},
  publisher={Oxford University Press}
}

@book{chen2012monte,
  title={Monte Carlo methods in Bayesian computation},
  author={Chen, Ming-Hui and Shao, Qi-Man and Ibrahim, Joseph G},
  year={2012},
  publisher={Springer Science \& Business Media}
}

@article{cox1972regression,
  title={Regression models and life-tables},
  author={Cox, David R},
  journal={Journal of the royal statistical society: Series B (methodological)},
  volume={34},
  number={2},
  pages={187--202},
  year={1972},
  publisher={Wiley Online Library}
}

@article{efron1977efficiency,
  title={The efficiency of Cox's likelihood function for censored data},
  author={Efron, Bradley},
  journal={Journal of the American statistical Association},
  volume={72},
  number={359},
  pages={557--565},
  year={1977},
  publisher={Taylor \& Francis}
}

@article{efron1979bootstrap,
  title={Bootstrap methods: another look at the jackknife},
  author={Efron, Bradley},
  journal={Annals of Statistics},
  volume={7},
  number={1},
  pages={1--26},
  year={1979},
  publisher={Springer}
}

@article{emek2011online,
  title={Online computation with advice},
  author={Emek, Yuval and Fraigniaud, Pierre and Korman, Amos and Ros{\'e}n, Adi},
  journal={Theoretical Computer Science},
  volume={412},
  number={24},
  pages={2642--2656},
  year={2011},
  publisher={Elsevier}
}

@article{fine1999proportional,
  title={A proportional hazards model for the subdistribution of a competing risk},
  author={Fine, Jason P and Gray, Robert J},
  journal={Journal of the American statistical association},
  volume={94},
  number={446},
  pages={496--509},
  year={1999},
  publisher={Taylor \& Francis}
}

@book{fleming2013counting,
  title={Counting processes and survival analysis},
  author={Fleming, Thomas R and Harrington, David P},
  year={2013},
  publisher={John Wiley \& Sons}
}

@article{gamerman1991dynamic,
  title={Dynamic Bayesian models for survival data},
  author={Gamerman, Dani},
  journal={Journal of the Royal Statistical Society Series C: Applied Statistics},
  volume={40},
  number={1},
  pages={63--79},
  year={1991},
  publisher={Oxford University Press}
}

@book{gelman2013bayesian,
  title     = {Bayesian Data Analysis},
  author    = {Gelman, Andrew and Carlin, John B. and Stern, Hal S. and Dunson, David B. and Vehtari, Aki and Rubin, Donald B.},
  edition   = {3},
  year      = {2013},
  publisher = {Chapman and Hall/CRC}
}

@article{guo2026semiparametric,
  title={Semiparametric Joint Modeling for Survival Analysis with Longitudinal Covariates},
  author={Guo, Wensheng and Wang, Tianhao},
  journal={Journal of the American Statistical Association},
  volume = {0},
  number = {0},
  pages={1--13},
  year={2026},
  publisher={Taylor \& Francis}
}

@article{hanson2006modeling,
  title={Modeling censored lifetime data using a mixture of gammas baseline},
  author={Hanson, Timothy E},
  journal = {Bayesian Analysis},
  volume  = {1},
  number  = {3},
  pages   = {575--594},
  year={2006}
}

@article{hjort1990nonparametric,
  title={Nonparametric Bayes estimators based on beta processes in models for life history data},
  author={Hjort, Nils Lid},
  journal={the Annals of Statistics},
  volume  = {18},
  number  = {3},
  pages={1259--1294},
  year={1990},
  publisher={JSTOR}
}

@book{hosmer2008applied,
  title={Applied survival analysis: regression modeling of time-to-event data},
  author={Hosmer Jr, David W and Lemeshow, Stanley and May, Susanne},
  year={2008},
  publisher={John Wiley \& Sons}
}

@book{ibrahim2001bayesian,
  title     = {Bayesian Survival Analysis},
  author    = {Ibrahim, Joseph G. and Chen, Ming-Hui and Sinha, Debajyoti},
  year      = {2001},
  publisher = {Springer},
  address   = {New York}
}

@article{jewson2018principles,
  title={Principles of Bayesian inference using general divergence criteria},
  author={Jewson, Jack and Smith, Jim Q and Holmes, Chris},
  journal={Entropy},
  volume={20},
  number={6},
  pages={442},
  year={2018}
}

@article{kim2003bayesian,
  title={Bayesian bootstrap for proportional hazards models},
  author={Kim, Yongdai and Lee, Jaeyong},
  journal={The Annals of Statistics},
  volume={31},
  number={6},
  pages={1905--1922},
  year={2003},
  publisher={Institute of Mathematical Statistics}
}

@book{klein2003survival,
  title={Survival analysis: techniques for censored and truncated data},
  author={Klein, John P and Moeschberger, Melvin L},
  volume={1230},
  year={2003},
  publisher={Springer}
}

@article{kleijn2012bernstein,
  title={The Bernstein-von-Mises theorem under misspecification},
  author={Kleijn, Bas JK and Van der Vaart, Aad W},
  journal={Electronic J. Statistics},
  volume={6},
  pages={354--381},
  year={2012}
}

@article{lee2016hierarchical,
  title={Hierarchical models for semicompeting risks data with application to quality of end-of-life care for pancreatic cancer},
  author={Lee, Kyu Ha and Dominici, Francesca and Schrag, Deborah and Haneuse, Sebastien},
  journal={Journal of the American Statistical Association},
  volume={111},
  number={515},
  pages={1075--1095},
  year={2016},
  publisher={Taylor \& Francis}
}

@article{lo1987large,
  title={A large sample study of the Bayesian bootstrap},
  author={Lo, Albert Y},
  journal={The Annals of Statistics},
  volume={15},
  number={1},
  pages={360--375},
  year={1987},
  publisher={JSTOR}
}

@article{lo1993bayesian,
  title={A Bayesian bootstrap for censored data},
  author={Lo, Albert Y},
  journal={The Annals of Statistics},
  volume={21},
  number={1},
  pages={100--123},
  year={1993},
  publisher={JSTOR}
}

@article{lyddon2019general,
  title={General Bayesian updating and the loss-likelihood bootstrap},
  author={Lyddon, Simon P and Holmes, Chris C and Walker, Stephen G},
  journal={Biometrika},
  volume={106},
  number={2},
  pages={465--478},
  year={2019},
  publisher={Oxford University Press}
}

@article{matsubara2024generalized,
  title={Generalized Bayesian inference for discrete intractable likelihood},
  author={Matsubara, Takuo and Knoblauch, Jeremias and Briol, Fran{\c{c}}ois-Xavier and Oates, Chris J},
  journal={Journal of the American Statistical Association},
  volume={119},
  number={547},
  pages={2345--2355},
  year={2024},
  publisher={Taylor \& Francis}
}

@article{mcgree2025approach,
  title={An Approach to Design Adaptive Clinical Trials With Time-to-Event Outcomes Based on a General Bayesian Posterior Distribution},
  author={McGree, James M and Overstall, Antony M and Jones, Mark and Mahar, Robert K},
  journal={Statistics in Medicine},
  volume={44},
  number={23-24},
  pages={e70207},
  year={2025},
  publisher={Wiley Online Library}
}

@inproceedings{minsker2014scalable,
  title={Scalable and robust Bayesian inference via the median posterior},
  author={Minsker, Stanislav and Srivastava, Sanvesh and Lin, Lizhen and Dunson, David},
  booktitle={International conference on machine learning},
  pages={1656--1664},
  year={2014},
  organization={PMLR}
}

@article{murray2015flexible,
  title={Flexible Bayesian survival modeling with semiparametric time-dependent and shape-restricted covariate effects},
  author={Murray, Thomas A and Hobbs, Brian P and Sargent, Daniel J and Carlin, Bradley P},
  journal={Bayesian analysis (Online)},
  volume={11},
  number={2},
  pages={381},
  year={2015}
}

@article{newton2021weighted,
  title={Weighted Bayesian bootstrap for scalable posterior distributions},
  author={Newton, Michael A and Polson, Nicholas G and Xu, Jianeng},
  journal={Canadian Journal of Statistics},
  volume={49},
  number={2},
  pages={421--437},
  year={2021},
  publisher={Wiley Online Library}
}

@article{newton1994approximate,
  title={Approximate Bayesian inference with the weighted likelihood bootstrap},
  author={Newton, Michael A and Raftery, Adrian E},
  journal={Journal of the Royal Statistical Society Series B: Statistical Methodology},
  volume={56},
  number={1},
  pages={3--26},
  year={1994},
  publisher={Oxford University Press}
}

@article{nie2023deep,
  title={Deep bootstrap for Bayesian inference},
  author={Nie, Lizhen and Ro{\v{c}}kov{\'a}, Veronika},
  journal={Philosophical Transactions of the Royal Society A: Mathematical, Physical and Engineering Sciences},
  volume={381},
  number={2247},
  pages={1--21},
  year={2023},
  publisher={The Royal Society}
}

@article{overstall2025gibbs,
  title={Gibbs optimal design of experiments},
  author={Overstall, Antony M and Holloway-Brown, Jacinta and McGree, James M},
  journal={Statistical Science},
  volume={0},
  number={0},
  pages={1--29},
  year={2025},
  publisher={Accepted}  
}

@article{prunster2023kidney,
  title={Kidney Donor Profile Index and allograft outcomes: interactive effects of estimated post-transplant survival score and ischaemic time},
  author={Prunster, Janelle and Wong, Germaine and Larkins, Nicholas and Wyburn, Kate and Francis, Ross and Mulley, William R and Ooi, Esther and Pilmore, Helen and Davies, Christopher E and Lim, Wai H},
  journal={Clinical Kidney Journal},
  volume={16},
  number={3},
  pages={473--483},
  year={2023},
  publisher={Oxford University Press}
}

@inproceedings{ranganath2014black,
  title={Black box variational inference},
  author={Ranganath, Rajesh and Gerrish, Sean and Blei, David},
  booktitle={Artificial intelligence and statistics},
  pages={814--822},
  year={2014},
  organization={PMLR}
}

@article{rao2009comprehensive,
  title={A comprehensive risk quantification score for deceased donor kidneys: the kidney donor risk index},
  author={Rao, Panduranga S and Schaubel, Douglas E and Guidinger, Mary K and Andreoni, Kenneth A and Wolfe, Robert A and Merion, Robert M and Port, Friedrich K and Sung, Randall S},
  journal={Transplantation},
  volume={88},
  number={2},
  pages={231--236},
  year={2009},
  publisher={LWW}
}

@article{rubin1981bayesian,
  title={The bayesian bootstrap},
  author={Rubin, Donald B},
  journal={The annals of statistics},
  volume={9},
  number={1},
  pages={130--134},
  year={1981},
  publisher={JSTOR}
}

@book{spiegelhalter2004bayesian,
  title={Bayesian approaches to clinical trials and health-care evaluation},
  author={Spiegelhalter, David J and Abrams, Keith R and Myles, Jonathan P},
  year={2004},
  publisher={John Wiley \& Sons}
}

@article{syring2019calibrating,
  title={Calibrating general posterior credible regions},
  author={Syring, Nicholas and Martin, Ryan},
  journal={Biometrika},
  volume={106},
  number={2},
  pages={479--486},
  year={2019},
  publisher={Oxford University Press}
}

@book{van1998asymptotic,
  title={Asymptotic statistics},
  author={Van der Vaart, Aad W},
  volume={3},
  year={1998},
  publisher={Cambridge university press}
}

@article{wu2023comparison,
  title={A comparison of learning rate selection methods in generalized Bayesian inference},
  author={Wu, Pei-Shien and Martin, Ryan},
  journal={Bayesian Analysis},
  volume={18},
  number={1},
  pages={105--132},
  year={2023},
  publisher={International Society for Bayesian Analysis}
}

@article{zhang2006from,
  title={From epsilon-entropy to KL-entropy: analysis of minimum risk},
  author={Zhang, Tong},
  journal={The Annals of Statistics},
  volume={34},
  number={1},
  pages={198--238},
  year={2006},
  publisher={Institute of Mathematical Statistics}
}

@article{zhou2023transformation,
  title={Transformation-invariant learning of optimal individualized decision rules with time-to-event outcomes},
  author={Zhou, Yu and Wang, Lan and Song, Rui and Zhao, Tuoyi},
  journal={Journal of the American Statistical Association},
  volume={118},
  number={544},
  pages={2632--2644},
  year={2023},
  publisher={Taylor \& Francis}
}

@article{zhu2025survival,
  title={Survival benefits of deceased donor kidney transplant vs waitlisting},
  author={Zhu, Lin and Teixeira-Pinto, Armando and Gately, Ryan and Boroumand, Farzaneh and Bakar, K Shuvo and Sabanayagam, Dharshana and Stanaway, Fiona F and Lim, Wai H and Wong, Germaine},
  journal={JAMA internal medicine},
  volume={185},
  number={12},
  pages={1471--1478},
  year={2025},
  publisher={American Medical Association}
}

\newpage 

\renewcommand{\thepage}{S\arabic{page}} 
\renewcommand{\thesection}{S\arabic{section}}  
\renewcommand{\thetable}{S\arabic{table}}  
\renewcommand{\thefigure}{S\arabic{figure}}
\setcounter{page}{0}
\setcounter{section}{0}
\setcounter{figure}{0}
\setcounter{table}{0}

\section*{Supplementary Material}

\section{The \texttt{R} package \texttt{BayesBoots} for BB-Cox}

\begin{verbatim}
## BB-Cox Example (lung cancer data)

> library(devtools)
> devtools::install_github("ksbakar/BayesBoots")
* installing *source* package 'BayesBoots' ...
** this is package 'BayesBoots' version '1.0.1'
** using staged installation
** R
** byte-compile and prepare package for lazy loading
** help
*** installing help indices
** building package indices
** testing if installed package can be loaded from temporary location
** testing if installed package can be loaded from final location
** testing if installed package keeps a record of temporary installation path
* DONE (BayesBoots)

> library(BayesBoots)
> library(survival)
> head(lung)
  inst time status age sex ph.ecog ph.karno pat.karno meal.cal wt.loss
1    3  306      2  74   1       1       90       100     1175      NA
2    3  455      2  68   1       0       90        90     1225      15
3    3 1010      1  56   1       0       90        90       NA      15
4    5  210      2  57   1       1       90        60     1150      11
5    1  883      2  60   1       0      100        90       NA       0
6   12 1022      1  74   1       1       50        80      513       0

> coxdata <- na.omit(lung)
> fit_cox <- runBB(
+  data = coxdata,
+  model = "BB-Cox",
+  formula = survival::Surv(time, status) ~ age + sex + ph.ecog + wt.loss,
+  prior_fn = prior_ridge
+)
> fit_cox
############################################
 Bayesian Bootstrap Induced Survival Models 
############################################

Model class: BB-Cox 

Call:
survival::Surv(time, status) ~ age + sex + ph.ecog + wt.loss

############################################

> summary(fit_cox)
############################################
 Bayesian Bootstrap Induced Survival Models 
############################################

Model type: BB-Cox
Call:
  survival::Surv(time, status) ~ age + sex + ph.ecog + wt.loss 

 BB Posterior estimates:
     term       beta     HR HR_lower HR_upper
1     age -0.0003768 0.9996   0.9742   1.0430
2     sex -0.4462627 0.6400   0.3819   0.8196
3 ph.ecog  0.5169200 1.6769   1.3340   2.5382
4 wt.loss -0.0099590 0.9901   0.9747   1.0148

> plot(fit_cox)
\end{verbatim}
\begin{figure}[h]
\centering
\includegraphics[width=0.7\textwidth]{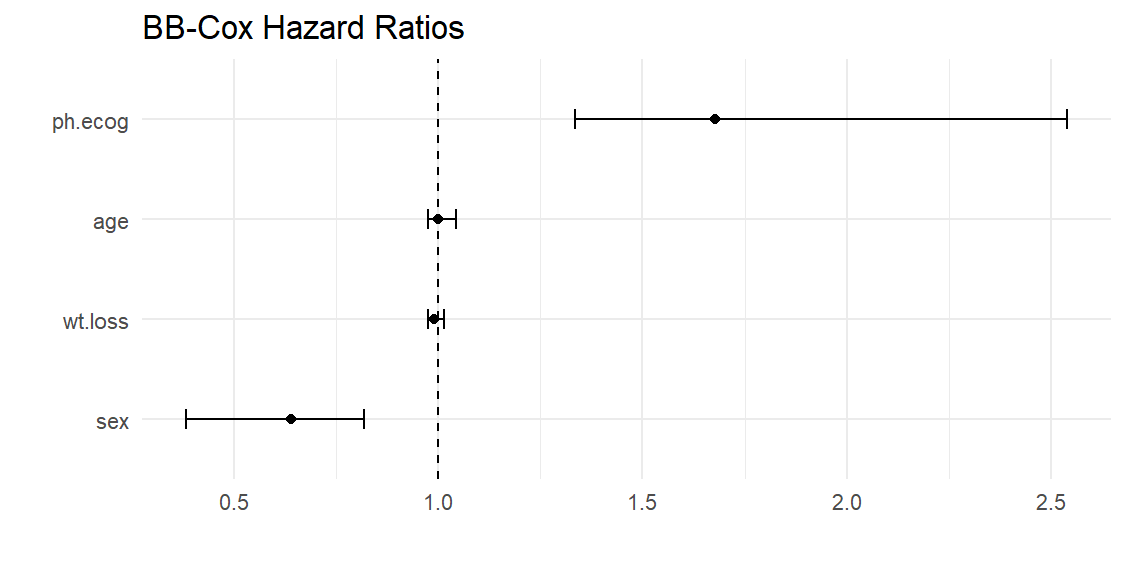}
\end{figure}
\begin{verbatim}
> plot(fit_cox, type="surv", group_var = "sex")    
\end{verbatim}
\begin{figure}[h]
\centering
\includegraphics[width=0.7\textwidth]{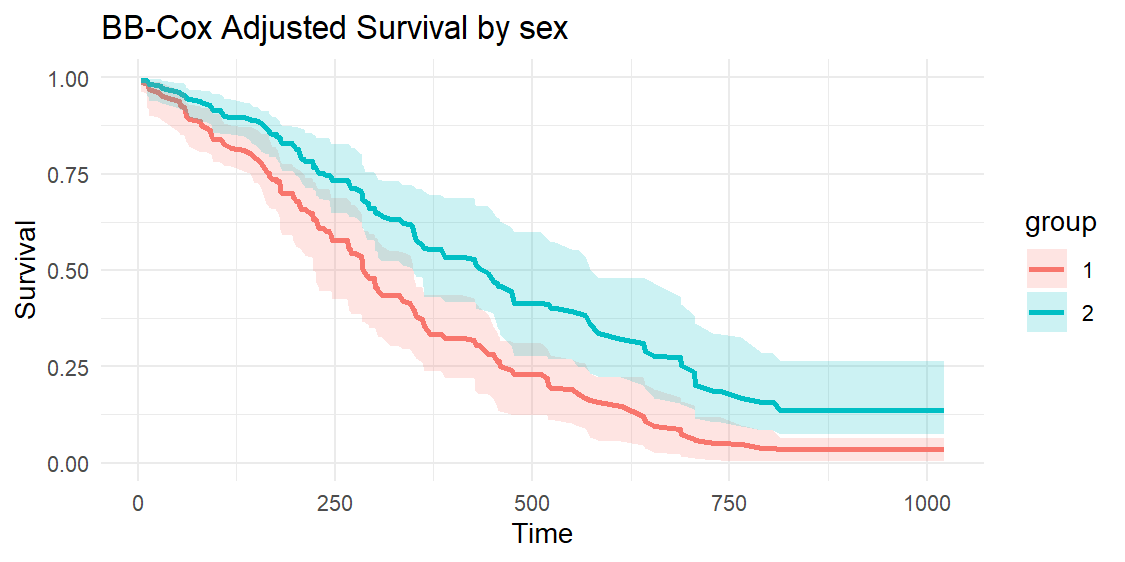}
\end{figure}

\newpage
 
\section{Simulation results: Validation statistics}

\begingroup\fontsize{8}{10}\selectfont

\begin{longtable}[t]{lrlrrrrrrr}
\caption{\label{tab:tab:performance_all_n}Performance of the Cox, Bayesian Cox, and BB-Cox methods across simulation scenarios and sample sizes.}\\
\toprule
\textbf{Scenario} & \textbf{Sample Size} & \textbf{Method} & \textbf{Bias} & \textbf{MSE} & \textbf{MAE} & \textbf{RMSE} & \textbf{Estimate} & \textbf{Lower} & \textbf{Upper}\\
\midrule
\endfirsthead
\caption[]{Performance of the Cox, Bayesian Cox, and BB-Cox methods across simulation scenarios and sample sizes. \textit{(continued)}}\\
\toprule
\textbf{Scenario} & \textbf{Sample Size} & \textbf{Method} & \textbf{Bias} & \textbf{MSE} & \textbf{MAE} & \textbf{RMSE} & \textbf{Estimate} & \textbf{Lower} & \textbf{Upper}\\
\midrule
\endhead
\endfoot
\bottomrule
\endlastfoot
Scenario-1 & 50 & Cox & 0.028 & 0.148 & 0.308 & 0.384 & 0.728 & 0.020 & 1.436\\
Scenario-1 & 50 & Bayesian Cox & 0.050 & 0.155 & 0.315 & 0.394 & 0.750 & 0.055 & 1.471\\
Scenario-1 & 50 & BB-Cox & 0.042 & 0.158 & 0.318 & 0.398 & 0.742 & 0.073 & 1.450\\
Scenario-2 & 50 & Cox & 0.024 & 0.139 & 0.293 & 0.373 & 0.724 & 0.016 & 1.431\\
Scenario-2 & 50 & Bayesian Cox & 0.010 & 0.129 & 0.283 & 0.360 & 0.710 & 0.033 & 1.411\\
Scenario-2 & 50 & BB-Cox & 0.010 & 0.132 & 0.287 & 0.364 & 0.710 & 0.055 & 1.390\\
Scenario-3 & 50 & Cox & -0.012 & 0.127 & 0.283 & 0.357 & 0.688 & -0.015 & 1.391\\
Scenario-3 & 50 & Bayesian Cox & 0.001 & 0.081 & 0.227 & 0.284 & 0.701 & 0.088 & 1.333\\
Scenario-3 & 50 & BB-Cox & -0.004 & 0.085 & 0.231 & 0.291 & 0.696 & 0.099 & 1.315\\
Scenario-1 & 100 & Cox & 0.025 & 0.063 & 0.198 & 0.251 & 0.725 & 0.237 & 1.212\\
Scenario-1 & 100 & Bayesian Cox & 0.036 & 0.065 & 0.201 & 0.256 & 0.736 & 0.259 & 1.224\\
Scenario-1 & 100 & BB-Cox & 0.032 & 0.066 & 0.202 & 0.257 & 0.732 & 0.262 & 1.218\\
Scenario-2 & 100 & Cox & 0.001 & 0.061 & 0.196 & 0.247 & 0.701 & 0.215 & 1.187\\
Scenario-2 & 100 & Bayesian Cox & -0.003 & 0.059 & 0.193 & 0.243 & 0.697 & 0.228 & 1.178\\
Scenario-2 & 100 & BB-Cox & -0.004 & 0.060 & 0.195 & 0.246 & 0.696 & 0.241 & 1.173\\
Scenario-3 & 100 & Cox & 0.016 & 0.061 & 0.198 & 0.248 & 0.716 & 0.228 & 1.204\\
Scenario-3 & 100 & Bayesian Cox & 0.022 & 0.050 & 0.178 & 0.223 & 0.722 & 0.273 & 1.180\\
Scenario-3 & 100 & BB-Cox & 0.021 & 0.051 & 0.180 & 0.225 & 0.721 & 0.276 & 1.174\\
Scenario-1 & 150 & Cox & -0.001 & 0.041 & 0.164 & 0.203 & 0.699 & 0.305 & 1.093\\
Scenario-1 & 150 & Bayesian Cox & 0.006 & 0.042 & 0.165 & 0.205 & 0.706 & 0.322 & 1.099\\
Scenario-1 & 150 & BB-Cox & 0.004 & 0.042 & 0.165 & 0.205 & 0.704 & 0.321 & 1.097\\
Scenario-2 & 150 & Cox & 0.009 & 0.043 & 0.166 & 0.207 & 0.709 & 0.314 & 1.104\\
Scenario-2 & 150 & Bayesian Cox & 0.007 & 0.042 & 0.164 & 0.205 & 0.707 & 0.324 & 1.098\\
Scenario-2 & 150 & BB-Cox & 0.005 & 0.043 & 0.167 & 0.207 & 0.705 & 0.331 & 1.099\\
Scenario-3 & 150 & Cox & -0.001 & 0.039 & 0.157 & 0.197 & 0.699 & 0.305 & 1.092\\
Scenario-3 & 150 & Bayesian Cox & 0.005 & 0.033 & 0.145 & 0.183 & 0.705 & 0.336 & 1.081\\
Scenario-3 & 150 & BB-Cox & 0.002 & 0.034 & 0.147 & 0.184 & 0.702 & 0.333 & 1.077\\
Scenario-1 & 200 & Cox & 0.010 & 0.031 & 0.142 & 0.176 & 0.710 & 0.370 & 1.050\\
Scenario-1 & 200 & Bayesian Cox & 0.015 & 0.032 & 0.143 & 0.178 & 0.715 & 0.383 & 1.054\\
Scenario-1 & 200 & BB-Cox & 0.014 & 0.032 & 0.143 & 0.179 & 0.714 & 0.383 & 1.052\\
Scenario-2 & 200 & Cox & 0.012 & 0.032 & 0.143 & 0.179 & 0.712 & 0.371 & 1.052\\
Scenario-2 & 200 & Bayesian Cox & 0.010 & 0.032 & 0.142 & 0.178 & 0.710 & 0.381 & 1.047\\
Scenario-2 & 200 & BB-Cox & 0.009 & 0.032 & 0.143 & 0.179 & 0.709 & 0.385 & 1.053\\
Scenario-3 & 200 & Cox & -0.000 & 0.027 & 0.129 & 0.163 & 0.700 & 0.359 & 1.040\\
Scenario-3 & 200 & Bayesian Cox & 0.005 & 0.024 & 0.122 & 0.154 & 0.705 & 0.384 & 1.033\\
Scenario-3 & 200 & BB-Cox & 0.003 & 0.024 & 0.123 & 0.155 & 0.703 & 0.378 & 1.032\\*
\end{longtable}
\endgroup{}

\end{document}